\DeclareMathOperator*{\argmin}{\arg\min}
\algrenewcommand{\algorithmiccomment}[1]{\vskip0.1ex$\triangleright$ #1}
\pgfplotsset{compat=newest} 
\pgfplotsset{plot coordinates/math parser=false}
\newlength\fwidth
\newlength\fheight
 \newtheorem{theorem}{Theorem}
 \newtheorem{corollary}[]{Corollary}
\newcommand\Tstrut{\rule{0pt}{3ex}}  
\begin{document}
\newcommand{\Matr}{\begin{bmatrix}}
\newcommand{\matr}{\end{bmatrix}}

\newcommand{\Acal}{{\cal A}}
\newcommand{\Bcal}{{\cal B}}
\newcommand{\Dcal}{{\cal D}}
\newcommand{\Ical}{{\cal I}}
\newcommand{\Kcal}{{\cal K}}
\newcommand{\Rn}{{\rm I \! R}}
\newcommand{\Cn}{{\rm C \hspace*{-1ex} \rule[0.1ex]{0.15ex}{1.3ex} \hspace{1ex}}}
\newcommand{\Cnsmall}{{\rm C \hspace*{-0.75ex} \rule[0.1ex]{0.1ex}{0.9ex} \hspace{0.75ex}}}
\newcommand{\Pcal}{{\cal P}}
\newcommand{\Rcal}{{\cal R}}
\newcommand{\Qcal}{{\cal Q}}
\newcommand{\Wcal}{{\cal W}}
\newcommand{\Xcal}{{\cal X}}
\newcommand{\Zcal}{{\cal Z}}

\newcommand{\trace}{{\rm tr} \,}
\newcommand{\diag}{{\bf Diag} \,}
\newcommand{\diagB}{{\bf diag} \,}
\newcommand{\ev}{{\bf ev} \,}
\newcommand{\sv}{{\bf sv} \,}
\newcommand{\Real}{{\rm Re} \,}
\newcommand{\Imag}{{\rm Im} \,}
\newcommand{\Even}{{\rm Ev} \,}
\newcommand{\Odd}{{\rm Od} \,}
\newcommand{\sgn}{{\rm sgn}}

\newcommand{\aB}{{\bf a}}
\newcommand{\bB}{{\bf b}}
\newcommand{\bBa}{\underline{\bf b}}
\newcommand{\cB}{{\bf c}}
\newcommand{\eB}{{\bf e}}
\newcommand{\eBa}{\underline{\bf e}}
\newcommand{\fB}{{\bf f}}
\newcommand{\fBa}{\underline{\bf f}}
\newcommand{\gB}{{\bf g}}
\newcommand{\gBa}{\underline{\bf g}}
\newcommand{\hB}{{\bf h}}
\newcommand{\jB}{{\bf j}}
\newcommand{\kB}{{\bf k}}
\newcommand{\mB}{{\bf m}}
\newcommand{\mBa}{\underline{\bf m}}
\newcommand{\nB}{{\bf n}}
\newcommand{\nBa}{\underline{\bf n}}
\newcommand{\pB}{{\bf p}}
\newcommand{\qB}{{\bf q}}
\newcommand{\ra}{{\underline{r}}}
\newcommand{\rB}{{\bf r}}
\newcommand{\rBc}{{\widetilde{\bf r}}}
\newcommand{\rBa}{\underline{\bf r}}
\newcommand{\sB}{{\bf s}}
\newcommand{\sBc}{{\widetilde{\bf s}}}
\newcommand{\sBa}{\underline{\bf s}}
\newcommand{\tB}{{\bf t}}
\newcommand{\uB}{{\bf u}}
\newcommand{\vB}{{\bf v}}
\newcommand{\vBa}{\underline{\bf v}}
\newcommand{\wB}{{\bf w}}
\newcommand{\wBa}{\underline{\bf w}}
\newcommand{\xB}{{\bf x}}
\newcommand{\xBa}{\underline{\bf x}}
\newcommand{\ya}{\underline{y}}
\newcommand{\yB}{{\bf y}}
\newcommand{\yBa}{\underline{\bf y}}
\newcommand{\zB}{{\bf z}}

\newcommand{\epsilonB}{{\bm \epsilon}}
\newcommand{\PiB}{{\bm \Pi}}
\newcommand{\gammaB}{{\bm \gamma}}
\newcommand{\SigmaB}{{\bm \Sigma}}
\newcommand{\sigmaB}{{\bm \sigma}}
\newcommand{\sigmaBc}{\widetilde{\bm \sigma}}
\newcommand{\sigmaBa}{\underline{\bm \sigma}}
\newcommand{\rhoc}{\widetilde{\rho}}
\newcommand{\rhoB}{{\bm \rho}}
\newcommand{\rhoBc}{\widetilde{\bm \rho}}
\newcommand{\rhoBa}{\underline{\bm \rho}}
\newcommand{\PhiBa}{\underline{\bm \Phi}}
\newcommand{\psiB}{{\bm \psi}}
\newcommand{\PsiB}{{\bm \Psi}}
\newcommand{\PsiBa}{\underline{\bm \Psi}}
\newcommand{\tauB}{{\bm \tau}}
\newcommand{\LambdaB}{{\bf \Lambda}}
\newcommand{\LambdaBa}{\underline{\bm \Lambda}}
\newcommand{\lambdaB}{{\bm \lambda}}
\newcommand{\muB}{{\bm \mu}}
\newcommand{\muBa}{\underline{\bm \mu}}
\newcommand{\XiB}{{\bm \Xi}}
\newcommand{\xiB}{{\bm \xi}}
\newcommand{\xiBc}{\widetilde{\bm \xi}}
\newcommand{\xiBa}{\underline{\bm \xi}}
\newcommand{\xiBo}{\overline{\bm \xi}}
\newcommand{\omegaB}{{\bm \omega}}
\newcommand{\omegaBc}{\widetilde{\bm \omega}}
\newcommand{\omegaBa}{\underline{\bm \omega}}
\newcommand{\omegaBo}{\overline{\bm \omega}}
\newcommand{\UpsilonB}{{\bm \Upsilon}}
\newcommand{\thetaB}{{\bm \theta}}
\newcommand{\thetaBa}{\underline{\bm \theta}}
\newcommand{\ThetaB}{{\bm \Theta}}

\newcommand{\AB}{{\bf A}}
\newcommand{\ABa}{\underline{\bf A}}
\newcommand{\ABb}{\Bbb{A}}
\newcommand{\ABc}{\widetilde{\bf A}}
\newcommand{\ABbc}{\widetilde{\Bbb{A}}}
\newcommand{\ABba}{\underline{\Bbb{A}}}
\newcommand{\BB}{{\bf B}}
\newcommand{\BBa}{\underline{\bf B}}
\newcommand{\BBc}{\widetilde{\bf B}}
\newcommand{\CB}{{\bf C}}
\newcommand{\CBa}{\underline{\bf C}}
\newcommand{\CBc}{\widetilde{\bf C}}
\newcommand{\DB}{{\bf D}}
\newcommand{\DBb}{\Bbb{D}}
\newcommand{\EB}{{\bf E}}
\newcommand{\FB}{{\bf F}}
\newcommand{\FBa}{\underline{\bf F}}
\newcommand{\FBc}{\widetilde{\bf F}}
\newcommand{\GB}{{\bf G}}
\newcommand{\GBa}{\underline{\bf G}}
\newcommand{\GBc}{\widetilde{\bf G}}
\newcommand{\HB}{{\bf H}}
\newcommand{\JB}{{\bf J}}
\newcommand{\JBc}{\tilde{\bf J}}
\newcommand{\JBba}{\underline{\Bbb{J}}}
\newcommand{\HBc}{\widetilde{\bf H}}
\newcommand{\HBa}{\underline{\bf H}}
\newcommand{\HBb}{\Bbb{H}}
\newcommand{\HBbc}{\widetilde{\Bbb{H}}}
\newcommand{\HBba}{\underline{\Bbb{H}}}
\newcommand{\IB}{{\bf I}}
\newcommand{\JBa}{\underline{\bf J}}
\newcommand{\KB}{{\bf K}}
\newcommand{\KBb}{\Bbb{K}}
\newcommand{\KBa}{\underline{\bf K}}
\newcommand{\KBo}{\overline{\bf K}}
\newcommand{\KBc}{\widetilde{\bf K}}
\newcommand{\LB}{{\bf L}}
\newcommand{\LBa}{\underline{\bf L}}
\newcommand{\MB}{{\bf M}}
\newcommand{\MBa}{\underline{\bf M}}
\newcommand{\NB}{{\bf N}}
\newcommand{\NBa}{\underline{\bf N}}
\newcommand{\PB}{{\bf P}}
\newcommand{\PBa}{\underline{\bf P}}
\newcommand{\PBb}{\Bbb{P}}
\newcommand{\PBbc}{\widetilde{\Bbb{P}}}
\newcommand{\PBba}{\underline{\Bbb{P}}}
\newcommand{\PBc}{\widetilde{\bf P}}
\newcommand{\QB}{{\bf Q}}
\newcommand{\QBc}{\widetilde{\bf Q}}
\newcommand{\QBa}{\underline{\bf Q}}
\newcommand{\QBb}{\Bbb{Q}}
\newcommand{\QBbc}{\widetilde{\Bbb{Q}}}
\newcommand{\QBba}{\underline{\Bbb{Q}}}
\newcommand{\RB}{{\bf R}}
\newcommand{\RBa}{\underline{\bf R}}
\newcommand{\RBb}{\Bbb{R}}
\newcommand{\RBba}{\underline{\Bbb{R}}}
\newcommand{\RBc}{\widetilde{\bf R}}
\newcommand{\SB}{{\bf S}}
\newcommand{\SBc}{\widetilde{\bf S}}
\newcommand{\SBa}{\underline{\bf S}}
\newcommand{\SBb}{\Bbb{S}}
\newcommand{\SBba}{\underline{\Bbb{S}}}
\newcommand{\TB}{{\bf T}}
\newcommand{\TBa}{\underline{\bf T}}
\newcommand{\TBc}{\widetilde{\bf T}}
\newcommand{\UB}{{\bf U}}
\newcommand{\UBa}{\underline{\bf U}}
\newcommand{\VB}{{\bf V}}
\newcommand{\VBa}{\underline{\bf V}}
\newcommand{\WB}{{\bf W}}
\newcommand{\WBc}{\widetilde{\bf W}}
\newcommand{\WBa}{\underline{\bf W}}
\newcommand{\XB}{{\bf X}}
\newcommand{\XBa}{\underline{\bf X}}
\newcommand{\YB}{{\bf Y}}
\newcommand{\YBa}{\underline{\bf Y}}
\newcommand{\ZB}{{\bf Z}}
\newcommand{\BZ}{{\bf 0}}
\newcommand{\oneB}{{\bf 1}}

\title{ Determining the Dimension and Structure of the  Subspace Correlated Across Multiple Data Sets} 
\author{\IEEEauthorblockN{Tanuj Hasija,~\IEEEmembership{Student Member,~IEEE,} Christian Lameiro,~\IEEEmembership{Member,~IEEE,}  Timothy Marrinan,~\IEEEmembership{Member,~IEEE,} and  Peter J. Schreier,~\IEEEmembership{Senior~Member,~IEEE}\\}
\thanks{ The authors are with the Signal and System Theory Group, Universit\"{a}t Paderborn 33098, Germany (e-mail: tanuj.hasija@sst.upb.de, christian.lameiro@sst.upb.de, timothy.marrinan@sst.upb.de, peter.schreier@sst.upb.de).}
\thanks{This research was supported by the German Research Foundation (DFG) under grant SCHR 1384/3-2.}} 

\maketitle

\begin{abstract}
\textbf{ Detecting the components common or correlated across multiple data sets is challenging due to a large number of possible correlation structures among the components. Even more challenging is to determine the precise structure of these correlations. Traditional work has focused on determining only the model order, i.e., the dimension of the correlated subspace, a number that depends on how the model-order problem is defined. Moreover, identifying the model order is often not enough to understand the relationship among the components in different data sets. We aim at solving the complete model-selection problem, i.e., determining which components are correlated across which data sets. We prove that the eigenvalues and eigenvectors of the normalized covariance matrix of the composite data vector, under certain conditions, completely characterize the underlying correlation structure.  We use these results to solve the model-selection problem by employing bootstrap-based hypothesis testing. }
\end{abstract}

\begin{keywords}
\fontsize{9}{9}\selectfont
\textbf{Bootstrap, correlated subspace, hypothesis testing, model-order selection, multiple data sets.}
\end{keywords}

\section{Introduction}

Canonical correlation analysis (CCA),  multiset CCA (mCCA), and their variants have been widely used for jointly analyzing the relationships among different sets of data \cite{hotelling1936relations,kettenring1971canonical,asendorf2015improving,carroll1968generalization,li2009joint}. These tools extract components from each data set, referred to as canonical variables,  that are highly correlated across different data sets. The canonical variables can then be used in a plethora of applications including data fusion in biomedicine \cite{correa2010canonical,adali2015multimodal,levin2016sample}, feature extraction in machine learning \cite{hardoon2004canonical}, finding coupled patterns in oceanography \cite{tippett2008regression}, extracting gene clusters in genomics \cite{yamanishi2003extraction}, etc. Despite the abundance of interest in extracting correlated components, one key question often remains overlooked. Is the estimated correlation of these components actually present in the underlying system, or is it an artifact due to noise or an insufficient number of samples?
Sometimes the answer to this question is assumed to be known a priori from domain-specific knowledge. However, outside of that limited realm, most applications employ methods for thresholding the correlation coefficients as a way to determine the significant correlated components. These solutions are generally heuristic and often fail for one of two main reasons: 1) when the number of samples is limited, the correlation coefficient among the estimated components is overestimated, even to the point of identifying nonexistent, spurious correlations, 2) the number of possible correlation structures among the extracted components combinatorially increases as the number of data sets increases.   

Some techniques in the past have aimed to solve this as a \textit{model-order selection} problem. In signal processing, \enquote{model order} is the term used for the dimension of a parameter vector, i.e., the number of parameters of the data model \cite{stoica2004model}. Thus, estimating the number of correlated components can be posed as a model-order selection problem. For two data sets, the signal components are either correlated or uncorrelated across both sets, and the model-order selection problem is well-defined. In this case, counting the number of correlated components also answers the question of which components are correlated. Some of the techniques for estimating the model order for two data sets are \cite{stoica1996nonparametric,chen1996detection,Song2016canonical}. 	However, for more than two data sets, the model-order selection problem is ambiguous. It is possible for the components to be correlated across no data sets, all data sets, or some subset of the collection. In this context, many generalizations of model-order selection are valid, and the problem must be more precisely defined. Fig. \ref{fig:Example_Introduction} illustrates an example of correlation structure between the latent signal components of three data sets, $\sB_1$, $\sB_2$, and $\sB_3$. In most of cases, however, we observe linear mixtures of these latent components instead of observing them directly. Nonetheless, Fig. \ref{fig:Example_Introduction} can be used as a reference example to differentiate between the possible definitions of model-order selection for multiple data sets.  Each column of the figure indicates the components of one data set and thus components of each column have the same subscript. Each row represents the individual signal components that can be correlated between different pairs of data sets. In this example each data set contains five signal components that are mutually uncorrelated within their set. The set $\mathbf{s}_1$ is repeated again in the last column to illustrate the correlation between the first and the third data set. Here, the first signal component  of each data set is correlated with the first component of each other data set (the correlation is indicated with red arrows), and the next three components  are correlated only between a pair of data sets (indicated with black arrows). The fifth component of each data set is uncorrelated with all other data sets.

One variation of the model-order selection problem for multiple data sets is to determine the number of signal components correlated across every pair of data sets. In Fig. \ref{fig:Example_Introduction}, only the first signal components demonstrate this property, so the model order by this definition would be one. Model orders of this type can be identified using the methods described in \cite{wu2002determination,hasija2016detecting,songdetermining,hasija2016bootstrap,bhinge2017estimation}. The methods in \cite{wu2002determination} and \cite{hasija2016detecting} require the assumption that if signal components are correlated across any data sets then they must be correlated across all data sets, and these methods fail to estimate the model order when this assumption is not met. This assumption is relaxed in \cite{songdetermining} and generalized to arbitrary correlation structures in \cite{hasija2016bootstrap} and \cite{bhinge2017estimation}.

\begin{figure}[t]
\centering
\includegraphics[scale=1]{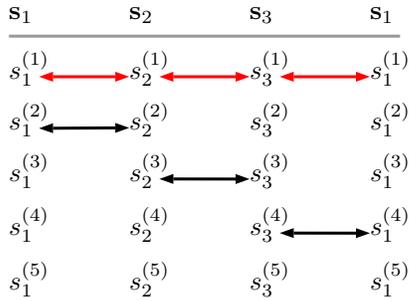}
\caption{  Example of a correlation structure between the latent signal components of three data sets, $\mathbf{s}_1$, $\mathbf{s}_2$ and $\mathbf{s}_3$. The first component is correlated across all the data sets, the next three are correlated across two data sets only and the fifth component in each data set is uncorrelated with the other components.}
 \label{fig:Example_Introduction}
\end{figure}

Alternatively, signal components correlated across a subset of the collection of data sets might also be of interest. For instance, when tracking an object in videos recorded by spatially separated cameras, the object might not be visible in every frame of each camera \cite{asendorf2015improving}. Thus if multiset canonical correlation is used to measure the similarity of frames from the different views, it would be pertinent for the model order to represent the number of components correlated across all data sets or a subset of data sets.  Similarly in brain imaging, estimating the number of signal components activated in the fMRI data of all the subjects is useful for multi-subject analysis \cite{correa2010canonical,calhoun2012multisubject}. However, some brain regions may not appear active for some subjects, due to noise or other factors, even if biological intuition suggests that they should be. Knowing that correlations exist among the signal components corresponding to these regions is useful even if they are only present in a subset of subjects.
These scenarios suggest that an appropriate definition of model order should count the signals that demonstrate correlation across all or a subset of data sets. By this definition the model order of the example in Fig. \ref{fig:Example_Introduction} is four. In \cite{asendorf2015improving}, the authors propose a test statistic to estimate this model order and showed  when this number can be correctly estimated for different settings of the signal-to-noise ratio (SNR) and the number of samples. 

In the end, determining only the model order is insufficient to completely characterize the correlation structure in multiple data sets. This summary statistic only provides the knowledge that the components exhibit correlation. This knowledge, although sufficient for two data sets, is incomplete for multiple data sets as it is also required to determine which components are correlated across which data sets. For the example in Fig. \ref{fig:Example_Introduction}, the complete solution is not just determining that the first four components are correlated but also that the first component in each data set  is correlated, and the successive components are correlated between data sets $1$ and $2$, $2$ and $3$, and $1$ and $3$, respectively. Therefore, we formulate and solve a more general \textit{model-selection} problem, which includes model-order selection as a subproblem. In this work, we propose a technique for solving this model-selection problem for multiple data sets using the eigenvalues and eigenvectors of the coherence matrix (normalized covariance matrix \cite{scharf2000canonical}) of the composite data set. The main contributions of this work are: 
\begin{itemize}
\item We prove that, under fairly general conditions, the correlation structure in multiple data sets can be fully characterized from the eigenvector decomposition of the coherence matrix of the composite data.
\item Using this theoretical result, we develop an algorithm that identifies the correlation structure effectively in practical scenarios. 
\end{itemize}

A competing solution to the model-selection problem was recently presented by \cite{marrinancomplete}.  The technique in \cite{marrinancomplete} applies a series of hypothesis tests to pairs of components extracted using CCA. It first determines the number of pairwise correlations, two data sets at a time, and amalgamates these detections using joint information from the complete collection of data sets. The technique proposed in this work has both practical and theoretical benefits over \cite{marrinancomplete} in the right circumstances. First, this work provides conditions under which the correlation structure can be identified, and justifies these conditions theoretically. Second, the proposed method relies solely on joint information from all of the data sets, and as a result, it demonstrates superior detection accuracy for components correlated across more data sets. Finally, for $P$ number of data sets, the proposed solution requires significantly fewer hypothesis tests per component; running only $P$ tests compared to the $2^P$ needed for \cite{marrinancomplete}.

Our program for this paper is as follows. After defining the problem in Section \ref{sec:prob_form}, we derive the main result in Section \ref{sec:coherence_matrix_proves}. In Section \ref{sec:bt_ht}, we present two hypothesis testing procedures  to determine the underlying correlation structure based on the results derived in Section \ref{sec:coherence_matrix_proves}. We use the bootstrap to estimate the distribution of the test statistic under the null hypothesis. Finally, in Section \ref{sec:results} simulations show that our technique reliably estimates the complete correlation structure of multiple data sets, and is competitive with the existing state-of-the-art approaches. 
 
\section{Problem Formulation}
\label{sec:prob_form}

We consider $P$ data sets consisting of zero-mean, real-valued random vectors, $\mathbf{x}_1,\ldots,\mathbf{x}_P$, each of dimension $n$. The data sets are generated by an unknown linear mixing of  underlying signal vectors, $\mathbf{s}_1,\ldots,\mathbf{s}_P \in \mathbb{R}^n$. The generating data model is \begin{equation}
\label{eq:DataModel}
\mathbf{x}_p = \mathbf{A}_p \mathbf{s}_p,  \quad p=1,2,\ldots,P,
\end{equation}
where $\mathbf{A}_p \in \mathbb{R}^{{n}\times n}$  is an unknown but fixed mixing matrix with full rank\footnote{All the results in this work can be easily extended to the general case where the data sets have different dimensions and the mixing matrices are non-square with full column rank. We omit the general case as the extension is trivial when the inverses are replaced by pseudo-inverses, because the more cumbersome bookkeeping distracts from the actual result.  }. 
We make the following additional assumptions: 
\begin{itemize}
\item  The signal vectors each contain $n$ signal components. The $i$th signal component of the $p$th data set is denoted by $s_{p}^{(i)}$.
These components are assumed to be zero-mean, unit variance, and uncorrelated within each data set, i.e.,
\begin{align}
E[\mathbf{s}_p] &= \mathbf{0} \qquad \text{and}\\
\mathbf{R}_{s_p s_p} &= E[\mathbf{s}_p\mathbf{s}_p^T] = \mathbf{I},
\end{align}
where $\mathbf{0}$ is a zero-vector, and $\mathbf{I}$ is an identity matrix. 
\item Only signal components with the same component number can have nonzero correlation. As such, the signal cross-covariance matrix between data sets $p$ and $q$ ($p \neq q$) is 
\begin{equation}
\RB_{s_p s_q} = \text{diag}(\rho_{pq}^{(1)},\rho_{pq}^{(2)}, \ldots, \rho_{pq}^{(n)} ),
\end{equation}
where diag($.$) denotes a diagonal matrix with specified diagonal elements, and $\rho_{pq}^{(i)} $ represents the unknown (possibly zero) correlation coefficient between their $i$th components. When analyzing the correlations between two data sets only, $\RB_{s_p s_q}$ can be assumed to be diagonal without loss of generality. For more than two sets, diagonal cross-covariance matrices are a restriction on the problem, as they do not represent all possible correlation structures. However, this assumption makes the multiset problem tractable and is commonplace in the literature \cite{li2009joint,anderson2014independent}. 
\end{itemize}

Our aim is to completely determine the underlying correlation structure among the signal components $\sB_p$ based on $M$ independent and identically distributed (i.i.d.) joint samples of the data vectors $\xB_p$, $p=1,\ldots,P$. 
We solve this problem by first estimating the model order $d$, which represents the total number of components that demonstrate nonzero correlation i.e., $d = | \{ i | \exists p,q \text{ for which } \rho_{pq}^{(i)} \neq 0\} |$. 
 Then we estimate the data sets across which these $d$ components are correlated, i.e., we identify all data sets $p$, $q$ for which $\rho_{pq}^{(i)} \neq 0$. 

To compare to existing techniques, two scalar model orders are also defined. We denote the number of signal components correlated pairwise between data sets $p$ and $q$ as $d_{pq}$. The number of signals correlated across all data sets is denoted by $d_{\textrm{all}}$, i.e., $d_{\textrm{all}} = | \{ i | \rho_{pq}^{(i)} \neq 0\text{ }\forall p,q |$. Prior works have either focused on determining $d_{pq}$ for two data sets \cite{Song2016canonical} or on estimating only $d_\text{all}$ for more than two data sets \cite{wu2002determination,hasija2016detecting,songdetermining,hasija2016bootstrap,bhinge2017estimation}. However, as discussed earlier, even knowing both $d_{pq}$ and $d_\text{all}$ is not enough to completely determine the underlying correlation structure  (except with very special types of correlation structures). 

\textbf{Examples}: Let us revisit the example in Fig. \ref{fig:Example_Introduction}. Table \ref{tab:Example_1} provides one example of correlation coefficients that match the structure presented in Fig. \ref{fig:Example_Introduction}.  The entries are the correlation coefficients between signal components of different pairs of data sets. In this case $d_{pq} = 2$ for all choices of $p$ and $q$, $d_\text{all} = 1$, and $d = 4$. Now consider an example of $4$ data sets each with $4$ signal components as shown in Table \ref{tab:Example_2}. The first component of data sets $2,3,$ and $4$ and the second component of data sets $1,3,$ and $4$ are correlated. The third and fourth components of data sets $1$ and $2$ are correlated. Hence, $d_\text{all} = 0$, $d = 4$ and $d_{pq}$ is the number of nonzero entries in the corresponding column. In both these examples, the techniques in \cite{Song2016canonical,wu2002determination,hasija2016detecting,songdetermining,hasija2016bootstrap,bhinge2017estimation} provide solutions for either $d_{pq}$ or $d_\text{all}$. Our technique, however, aims to identify which of the entries in Tables  \ref{tab:Example_1} and \ref{tab:Example_2} are nonzero. 
\begin{table}
\begin{center}
 \begin{tabular}{ c | c c c } 
 & $\rho_{12}^{(i)}$ & $\rho_{13}^{(i)}$ & $\rho_{23}^{(i)}$  \\[1ex]
   \hline \Tstrut
 $i=1$ & 0.5 & 0.6 & 0.6  \\[.5ex]
 $i=2$ & 0.7 & 0 & 0 \\[.5ex]
 $i=3$ & 0 & 0 & 0.8  \\[.5ex]
  $i=4$ & 0 & 0.4 & 0  \\[.5ex]
 $i=5$ & 0 & 0 & 0 
\end{tabular}
\end{center}
\caption{ Example with the correlation structure in Fig. \ref{fig:Example_Introduction} with three data sets each with five signal components. The entries are the correlation coefficients between signal components of different pairs of data sets.  } 
\label{tab:Example_1}
\end{table}

\begin{table}
\begin{center}
 \begin{tabular}{ c | c c c c c c} 
 & $\rho_{12}^{(i)}$ & $\rho_{13}^{(i)}$ & $\rho_{14}^{(i)}$ & $\rho_{23}^{(i)}$ & $\rho_{24}^{(i)}$ & $\rho_{34}^{(i)}$ \\[1ex]
  \hline \Tstrut
 $i=1$ & 0 & 0 & 0 & 0.7 & 0.2 & 0.8 \\[.5ex]
 $i=2$ & 0 & 0.6 & 0.4 & 0 & 0 & 0.5 \\[.5ex]
 $i=3$ & 0.5 & 0 & 0 & 0 & 0 & 0 \\[.5ex]
 $i=4$ & 0.5 & 0 & 0 & 0 & 0 & 0 
\end{tabular}
\end{center}
\caption{ Example of $4$ data sets each with $4$ signal components.   } 
\label{tab:Example_2}
\end{table}

\section{Correlated Subspace in Multiple Data Sets}
\label{sec:coherence_matrix_proves}

The problem as stated in Section \ref{sec:prob_form} requires joint knowledge of the relationships between all data sets.  In this section, our main results demonstrate that the pertinent information can be found in the eigenvector decomposition of the whitened covariance matrix of the concatenation of all data sets.  Consider the composite data vector $\mathbf{x}$ obtained by vertically concatenating the individual data vectors, 
\begin{equation}
\mathbf{x} = [\mathbf{x}_1^T, \ldots, \mathbf{x}_P^T]^T, 
\end{equation}
and the composite covariance matrix $\mathbf{R} = E[\mathbf{x} \mathbf{x}^T]$. Similarly, the composite signal vector is defined as $\sB = [\mathbf{s}_1^T, \ldots, \mathbf{s}_P^T]^T$, and the composite signal covariance matrix as $\RB_{ss} = E[\sB \sB^T]$.  The definition of the coherence matrix for two data sets \cite{scharf2000canonical} can be generalized in a natural way for this composite data as
\begin{equation}
\label{eq:aug_coh_matrix}
\mathbf{C} = \mathbf{R}_D^{-\frac{1}{2}} \mathbf{R} \mathbf{R}_D^{-\frac{1}{2}}.
\end{equation}
Here $\RB_D = \text{blkdiag}(\mathbf{R}_{11}, \ldots, \mathbf{R}_{PP})$ is a block-diagonal matrix with $\RB_{pp} = E[\xB_p \xB_p^T]$, and exponent \({- \frac{1}{2}}\) on $\RB_D$ denotes the inverse of the matrix square root. The composite coherence matrix, $\mathbf{C}$, can be written in a block structure given as
\begin{equation}
\label{eq:Cblock}
\CB =
 \begin{bmatrix}
  \mathbf{I} & \mathbf{C}_{12} & \cdots & \mathbf{C}_{1P} \\
  \mathbf{C}_{21} & \mathbf{I} & \cdots & \mathbf{C}_{2P} \\
  \vdots  & \vdots  & \ddots & \vdots  \\
  \mathbf{C}_{P1} & \mathbf{C}_{P2} & \cdots & \mathbf{I} 
 \end{bmatrix} ,
\end{equation}
where the diagonal blocks are identity matrices and the off-diagonal blocks are the coherence matrices of two data sets. For each pair of data sets $\xB_p$ and $\xB_q$, the coherence matrix can be decomposed as 
\begin{align}
\CB_{pq}  &= \mathbf{R}_{pp}^{-\frac{1}{2}} \mathbf{R}_{pq} \mathbf{R}_{qq}^{-\frac{1}{2}},\notag \\
		&= (\AB_p \RB_{s_p s_p} \AB_p^T)^{-\frac{1}{2}}  \AB_p \RB_{s_p s_q} \AB_q^T (\AB_q \RB_{s_q s_q} \AB_q^T)^{-\frac{1}{2}},  \notag \\
		&= (\AB_p  \AB_p^T)^{-\frac{1}{2}}  \AB_p \RB_{s_p s_q} \AB_q^T (\AB_q \AB_q^T)^{-\frac{1}{2}},
\end{align}
since $\RB_{s_p s_p} = \RB_{s_q s_q} =\IB$. Let $\FB_p = ({\AB}_p {\AB}_p^T )^{-1/2} {\AB}_p$ and similarly ${\FB}_q = ({\AB}_q {\AB}_q^T )^{-1/2} {\AB}_q$ so that  ${\FB}_p {\FB}_p^T = {\FB}_q {\FB}_q^T = {\IB} $, and we have 
\begin{align}
\label{eq:svd_cij}
\mathbf{C}_{pq} &= \mathbf{F}_p \RB_{s_p s_q} \mathbf{F}_q^T . 
\end{align}
 
 Using \eqref{eq:Cblock} and \eqref{eq:svd_cij}, the composite coherence matrix $\CB$ can be written as
 \begin{equation}
 \label{eq:Cxx_dec2}
\CB = \FB \RB_{ss} \FB^T,
 \end{equation}
 where $\FB = \text{blkdiag}(\FB_1, \ldots, \FB_{P})$. Based on the assumption that  $\RB_{s_p s_q}$ is diagonal, the elements of $\RB_{ss}$ can be permuted to form a block-diagonal matrix whose $i$th block is the covariance matrix formed by the $i$th components of each data set. That is, there exists a permutation $\PB$ where
\begin{align}
\label{eq:Cxx_perm}
\CB &= \FB \PB^T \PB \RB_{ss} \PB^T\PB \FB^T,  \notag\\ 
		&=  \FB \PB^T \widetilde{\RB}_{ss} \PB \FB^T, 
\end{align}
such that $\widetilde{\RB}_{ss} = \PB \RB_{ss} \PB^T$ is a block-diagonal matrix defined as 
\begin{equation}
\label{eq:Rsst}
\widetilde{\RB}_{ss} = \text{blkdiag}(\RB^{(1)},\ldots,\RB^{(n)}),
\end{equation} 
and $\RB^{(i)} \in \mathbb{R}^{P \times P}$ is the covariance matrix of the $i$th components of each data set. 

\subsection{Eigenvalues of $\CB$} 
\label{sec:Eval_Cxx}

We now illustrate an explicit relationship between the dimension of the correlated subspace, $d$, and the eigenvalues of $\CB$. It can be observed from \eqref{eq:Cblock} that $\CB$ is an identity matrix when all signal components are uncorrelated, and thus all eigenvalues of $\CB$ are one. However, when some components are correlated, $\CB $ has eigenvalues that are different from one. More specifically, when data sets demonstrate correlations across $d$ components, $\CB$ has at least $d$ eigenvalues greater than one. A key question then is: \textit{when is the dimension of the correlated subspace, $d$, exactly equal to the number of eigenvalues of $\CB$ greater than one?} 
 
 The answer is not as straightforward as one would hope. However, we can identify a set of sufficient conditions under which this property holds. The property also often holds when these conditions are not met and the algorithm proposed in this work is robust to the violation of these assumptions. The proof of the sufficiency of these conditions relies on three things: 
\begin{enumerate}[label=\roman*)]
\item \label{1cond} The composite coherence matrix $\CB$ is {similar} (through a similarity transformation) to a block diagonal matrix where each block contains a non-identity matrix corresponding to a collection of data sets whose $i$th components are all correlated with each other and an identity matrix corresponding to data sets whose $i$th components are uncorrelated. 
\item \label{2cond} If the $i$th components of four or more data sets are correlated with each other, all nonzero correlations are greater than a prescribed threshold.
\item \label{3cond} All correlations are transitive. That is, if a signal component is correlated between data sets $p$ and $q$, and between data sets $q$ and $r$, then it is also correlated between data sets $p$ and $r$.  
\end{enumerate}
For simplicity, we prove the result in Theorem \ref{theorem:eval} when there is only one block of the form described in \ref{1cond} for the $i$th components of all data sets, but of course the result holds for any number of such blocks. We will discuss the requirement of \ref{2cond} and \ref{3cond} during the proof of Theorem \ref{theorem:eval}.

\begin{theorem}
\label{theorem:eval}
 Let $\CB$ be the composite coherence matrix of $P$ data sets constructed according to the linear mixing model in \eqref{eq:DataModel} with pairwise diagonal signal cross-covariance matrices. Let $k^{(i)}$ be the number of data sets whose $i$th components are correlated. Assume that correlations are transitive, and for $k^{(i)} \geq 4$, each correlation coefficient is either $\rho_{pq}^{(i)} = 0$ or $\rho_{pq}^{(i)} > \epsilon^{(i)} = (\frac{k^{(i)}-1}{k^{(i)}})^2$ for all $p,q$. Let $\mathcal{I} = \{1,\ldots,n\}$ be an index set for the signals.
 $\CB$ has exactly $d$ eigenvalues greater than one if and only if there exists a subset of signals $\mathcal{D} \subseteq \mathcal{I}$ with $|\mathcal{D}|= d$, and for each $i \in \mathcal{D}$ there exists a $p \neq q$ such that  $s_p^{(i)}$ and $s_q^{(i)}$ are correlated. 
\end{theorem}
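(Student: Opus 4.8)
The plan is to reduce $\CB$ to a block-diagonal matrix and then count the large eigenvalues block by block. First I would observe that each $\FB_p=(\AB_p\AB_p^T)^{-1/2}\AB_p$ is a square matrix with $\FB_p\FB_p^T=\IB$, hence orthogonal, so $\FB=\text{blkdiag}(\FB_1,\ldots,\FB_P)$ and the permutation $\PB$ are orthogonal; equation~\eqref{eq:Cxx_perm} then exhibits $\CB$ as an orthogonal conjugate of $\widetilde{\RB}_{ss}=\RB^{(1)}\oplus\cdots\oplus\RB^{(n)}$, which is exactly item~\ref{1cond}. Consequently the eigenvalues of $\CB$ are precisely those of $\RB^{(1)},\ldots,\RB^{(n)}$ listed together with multiplicities, and the number of eigenvalues of $\CB$ exceeding $1$ equals $\sum_i n_i$, where $n_i$ is the number of eigenvalues of $\RB^{(i)}$ exceeding $1$. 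The theorem then reduces to showing $n_i=0$ when the $i$th components are pairwise uncorrelated and $n_i=1$ when some pair of $i$th components is correlated.

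The uncorrelated case is immediate, since $\RB^{(i)}=\IB$. For the correlated case I would invoke transitivity, item~\ref{3cond}: because $\rho_{pp}^{(i)}=1$, the relation ``$\rho_{pq}^{(i)}\neq0$'' is reflexive and symmetric, so transitivity turns it into an equivalence relation on $\{1,\ldots,P\}$; permuting rows and columns (which does not change the spectrum) brings $\RB^{(i)}$ to the form $\BB^{(i)}\oplus\IB_{P-k^{(i)}}$, where $\BB^{(i)}\in\mathbb{R}^{k^{(i)}\times k^{(i)}}$ is a correlation matrix every off-diagonal entry of which is nonzero (I treat a single nontrivial class, as in the statement; several classes are handled one block at a time). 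The identity summand contributes no eigenvalue above $1$, so it remains to prove that $\BB^{(i)}$ has exactly one eigenvalue greater than $1$.

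That $\BB^{(i)}$ has \emph{at least} one eigenvalue above $1$ is clear: its eigenvalues are nonnegative and sum to $\trace\BB^{(i)}=k^{(i)}$, and they cannot all equal $1$ since $\BB^{(i)}\neq\IB$. For the matching upper bound I would show $\lambda_2(\BB^{(i)})\le 1$ by Courant--Fischer, exhibiting the $(k^{(i)}-1)$-dimensional subspace $U=\oneB^{\perp}$ on which the Rayleigh quotient of $\BB^{(i)}$ stays $\le 1$. For $v\in U$, set $a=\sum_{p:\,v_p>0}v_p=-\sum_{p:\,v_p<0}v_p$; splitting $v^T(\BB^{(i)}-\IB)v=\sum_{p\neq q}\rho_{pq}^{(i)}v_pv_q$ into equal-sign pairs (bounded above by discarding $\rho_{pq}^{(i)}\le 1$) and opposite-sign pairs (bounded above using $\rho_{pq}^{(i)}>\epsilon^{(i)}$, since there $v_pv_q<0$), and then using $\sum_p v_p=0$ together with the elementary Cauchy--Schwarz bound $a^2\le\tfrac{k^{(i)}}{4}\|v\|^2$, I expect to reach
\[
v^T\bigl(\BB^{(i)}-\IB\bigr)v\ \le\ \Bigl(\tfrac{k^{(i)}}{2}\bigl(1-\epsilon^{(i)}\bigr)-1\Bigr)\|v\|^2\ \le\ 0,
\]
the last inequality holding exactly because $\epsilon^{(i)}=\bigl(\tfrac{k^{(i)}-1}{k^{(i)}}\bigr)^{2}\ge 1-\tfrac{2}{k^{(i)}}$. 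The small cases are base cases: for $k^{(i)}=2$ the eigenvalues of $\BB^{(i)}$ are $1\pm\rho_{12}^{(i)}$; for $k^{(i)}=3$ the matrix $\BB^{(i)}-\IB$ has zero trace, so exactly one of its eigenvalues is positive as soon as $\det(\BB^{(i)}-\IB)=2\rho_{12}^{(i)}\rho_{13}^{(i)}\rho_{23}^{(i)}>0$ (in particular for nonnegative correlations), which is why no magnitude threshold is needed for small $k^{(i)}$. Summing the $n_i$ then gives that the number of eigenvalues of $\CB$ above $1$ equals the number $d$ of correlated components, i.e., $d=|\Dcal|$ with $\Dcal$ the set of those components, yielding both directions of the stated equivalence.

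The step I expect to be the real obstacle is the bound $\lambda_2(\BB^{(i)})\le 1$ when $k^{(i)}\ge 4$. Without a lower bound on the correlations the claim is false --- a correlation matrix close to the direct sum of two strongly correlated $2\times 2$ blocks has two eigenvalues above $1$ while still having every off-diagonal entry nonzero --- so condition~\ref{2cond} is genuinely needed, and the delicate part is to choose the test subspace ($\oneB^{\perp}$) and calibrate the sign-splitting estimate so that it closes at precisely the threshold $\epsilon^{(i)}=\bigl(\tfrac{k^{(i)}-1}{k^{(i)}}\bigr)^{2}$.
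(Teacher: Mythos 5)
Your proposal is correct, and while the reduction to the block-diagonal form $\widetilde{\RB}_{ss}$ via the orthogonal matrix $\FB\PB^T$ and the per-component analysis of $\RB^{(i)}=\BB^{(i)}\oplus\IB$ follow the paper exactly, the decisive step is handled quite differently. The paper writes $\BB^{(i)}=\IB+\HB^{(i)}$ and, for $k^{(i)}\geq 4$, \emph{cites} Theorem 3.5 of Charles et al., whose proof runs through generalized Ramsey numbers, to conclude that the hollow matrix $\HB^{(i)}$ has exactly one positive eigenvalue when all off-diagonal entries exceed $\epsilon^{(i)}=\bigl(\tfrac{k^{(i)}-1}{k^{(i)}}\bigr)^2$; your small cases $k^{(i)}=2,3$ coincide with the paper's (it uses Descartes' rule of signs where you use the trace--determinant argument, which is equivalent). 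Your replacement for the cited result --- Courant--Fischer on the test subspace $\oneB^{\perp}$, splitting $v^T(\BB^{(i)}-\IB)v=\sum_{p\neq q}\rho^{(i)}_{pq}v_pv_q$ into same-sign pairs (bounded by $2a^2-\|v\|^2$ using $\rho^{(i)}_{pq}\leq 1$) and opposite-sign pairs (bounded by $-2\epsilon^{(i)}a^2$ using $\rho^{(i)}_{pq}>\epsilon^{(i)}$), together with $a^2\leq\tfrac{k^{(i)}}{4}\|v\|^2$ --- checks out and is self-contained. It in fact closes at the weaker threshold $\epsilon^{(i)}\geq 1-\tfrac{2}{k^{(i)}}$, which is strictly below the paper's $\bigl(\tfrac{k^{(i)}-1}{k^{(i)}}\bigr)^2=1-\tfrac{2}{k^{(i)}}+\tfrac{1}{(k^{(i)})^2}$, so your argument proves a marginally stronger statement while avoiding the Ramsey-theoretic machinery; what it gives up is the broader characterization in Charles et al.\ (arbitrary numbers of nonpositive eigenvalues), which the paper also leans on for its discussion of why an element-wise threshold is essentially unavoidable. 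Your observation that a correlation matrix near a direct sum of two strongly correlated $2\times 2$ blocks defeats the count without condition \ref{2cond} is exactly the right justification for why the threshold is needed, and your unified count $\sum_i n_i$ delivers both directions of the equivalence more cleanly than the paper's separate contradiction argument for the converse.
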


\begin{proof}

We begin by showing that if there exists a subset of signals $\mathcal{D} \subseteq \mathcal{I}$ with $|\mathcal{D}|= d,$ and for each $i \in \mathcal{D}$  there exists a $p,q$  with $s_p^{(i)}$ and $s_q^{(i)}$ correlated, $\CB$ has $d$ eigenvalues greater than one. Let  $\widetilde{\RB}_{ss}$ be an $nP \times nP$ matrix with the structure defined in \eqref{eq:Rsst}. The diagonal blocks of $\widetilde{\RB}_{ss}$ can be indexed by the associated signal component.  That is,   $\RB^{(i)} \in \mathbb{R}^{P \times P}$ is the covariance matrix of the $i$th components of each data set with the form
\begin{equation}
\label{eq:Ri}
\RB^{(i)} = \begin{bmatrix}
\BB^{(i)} & \mathbf{0} \\
\mathbf{0} & \IB
\end{bmatrix}.
\end{equation}
 $\BB^{(i)} \in \mathbb{R}^{k^{(i)} \times k^{(i)}}$ is a symmetric matrix with diagonal elements equal to one and off-diagonal elements equal to the correlation coefficients between the correlated $i$th components.  The indices associated with the subset of data sets whose $i$th components are correlated is $\mathcal{K}^{(i)} \subseteq \mathcal{P} = \{1, \ldots P\}$, and the dimensions of $\BB^{(i)}$ are the size of this subset, $|\mathcal{K}^{(i)}| = k^{(i)}$.  As $\widetilde{\RB}_{ss}$ is block-diagonal, its eigenvalues are equal to the eigenvalues of the blocks $\RB^{(i)}$. 
Since $\FB \PB^T$ is an orthogonal matrix, the eigenvalues of $\CB$ are equal to the eigenvalues of $\widetilde{\RB}_{ss}$ and therefore, equal to the eigenvalues of $\RB^{(i)}$. 

Let  $\BB^{(i)}$ be decomposed as $\BB^{(i)} = \IB + \HB^{(i)}$ for each $i$. $\HB^{(i)}$ is a hollow (with zeros on the diagonal) symmetric matrix whose off-diagonal elements are the nonzero correlation coefficients corresponding to the $i$th component.  We show that the matrix $\HB^{(i)}$ has exactly one positive eigenvalue for each $i$  as follows:
\\ \underline{Case 1 ($k^{(i)}=2$)}:  $\HB^{(i)}$ has exactly one positive eigenvalue for all values of $\rho_{pq}^{(i)} >0$, because 
\begin{equation}
\HB^{(i)} = \begin{bmatrix}
0 & \rho_{pq}^{(i)} \\
\rho_{pq}^{(i)} & 0
\end{bmatrix},
\end{equation}
which has eigenvalues $\{ \rho_{pq}^{(i)} ,-\rho_{pq}^{(i)} \}$.
\\ \underline{Case 2 ($k^{(i)}=3$)}: When the $i$th components of exactly three data sets, $p,q$ and $r$, are correlated, $\HB^{(i)}$ is  
\begin{equation}
\HB^{(i)} = \begin{bmatrix}
0 & \rho_{pq}^{(i)}  & \rho_{pr}^{(i)}\\
\rho_{pq}^{(i)} & 0 &  \rho_{qr}^{(i)} \\
\rho_{pr}^{(i)} & \rho_{qr}^{(i)} &  0
\end{bmatrix}.
\end{equation}
The characteristic polynomial of $\HB^{(i)}$ is $y = -\lambda^3 + \lambda(\rho_{pq}^{(i)2}+\rho_{pr}^{(i)2}+\rho_{qr}^{(i)2} ) + 2\rho_{pq}^{(i)}\rho_{pr}^{(i)}\rho_{qr}^{(i)}$. Using Descartes' rule of sign change, $y$ has only one positive root for any $\rho_{pq}^{(i)},\rho_{pr}^{(i)},\rho_{qr}^{(i)} > 0$ \cite{struik1969source}. Therefore, $\HB^{(i)}$ has only one positive eigenvalue.  
\\ \underline{Case 3 ($k^{(i)} \geq 4$)}: \cite[Theorem 3.5]{charles2013nonpositive} can be used to show that $\HB^{(i)}$ has exactly one positive eigenvalue if all the off-diagonal elements of $\HB^{(i)}$ are greater than $\epsilon^{(i)} = (\frac{k^{(i)}-1}{k^{(i)}})^2$. This result is demonstrated in Appendix \ref{app:lem1}. Without the assumption of transitive correlations, $\HB^{(i)}$ cannot be guaranteed to have all the positive off-diagonal elements as required. 
Thus, $\HB^{(i)}$ has exactly one positive eigenvalue in each case as desired. 

Let  $\mathcal{D} \subseteq \mathcal{I}$ be the $d$ values of $i$ for which correlation exists. For each $i \in \mathcal{D}$,  $\HB^{(i)}$ has one positive eigenvalue and $k^{(i)}-1$ non-positive eigenvalues. Let $\HB^{(i)} = \UB \mathbf{\Lambda}^{(i)} \UB^T$ be the eigenvalue decomposition (EVD) of $\HB^{(i)}$ with $\UB \UB^T = \IB$. The EVD of $\BB^{(i)}$ can be written as $\BB^{(i)} = \UB(\IB + \mathbf{\Lambda}^{(i)}) \UB^T$ so that, for each $i \in \mathcal{D}$, $\BB^{(i)}$ has one eigenvalue greater than one and the remaining $k^{(i)}-1$ eigenvalues less than or equal to one. Using \eqref{eq:Ri}, the maximum eigenvalue of $\BB^{(i)}$ is also the maximum eigenvalue of $\RB^{(i)}$, implying that $\RB^{(i)}$ has exactly one eigenvalue greater than one. Hence $\CB$ has exactly $d$ eigenvalues greater than one as desired. 

We now show that the converse is also true. Let $\CB$ has $d$ eigenvalues greater than one. There exists a permutation as described by \eqref{eq:Cxx_perm} where
\begin{equation}
\RB^{(i)} = \begin{bmatrix}
\BB^{(i)} & \mathbf{0} \\
\mathbf{0} & \IB
\end{bmatrix},
\end{equation}
and $\BB^{(i)}$ is the sum of an identity matrix and a hollow symmetric matrix of correlation coefficients, i.e., $\BB^{(i)} = \IB + \HB^{(i)}$, as described above.  When no data sets are correlated for a given $i$, $\RB^{(i)} = \IB$. Let $\mathcal{D}' \subseteq \mathcal{I}$ be the indices for which nonzero correlation exists (and by assumption is greater than $\epsilon^{(i)}$ for $k^{(i)} \geq 4$).  Thus, 
\begin{equation}
\RB^{(i)} = \begin{bmatrix}
\IB + \HB^{(i)} & \mathbf{0} \\
\mathbf{0} & \IB
\end{bmatrix},
\end{equation}
 for $i \in \mathcal{D}'$ and $\RB^{(i)} = \IB$ for $i \in \mathcal{I}\textbackslash\mathcal{D}'$.  Clearly $\RB^{(i)}$ has no eigenvalues greater than one for $i \in \mathcal{I}\textbackslash\mathcal{D}'$.

Suppose that $|\mathcal{D}'|<d$.  Then there exists an $i$ for which $\HB^{(i)}$ has more than one eigenvalue greater than zero. However, this contradicts our proof that $\HB^{(i)}$ has exactly one positive eigenvalue for each $i$.  Similarly, suppose that $|\mathcal{D}'|>d$. Then there exists an $i$ for which $\HB^{(i)}$ has no eigenvalues greater than zero.  This also contradicts our proof, and thus $|\mathcal{D}'|=d$ and there are exactly $d$ values of $i$ for which $\rho_{pq}^{(i)} \neq 0$ for some $p \neq q$.
\end{proof}

Theorem \ref{theorem:eval} guarantees that, when  $d$ eigenvalues of $\CB$ are greater than one, $d$ signal components are correlated across at least a pair of data sets. Therefore, the number of correlated signal components can be determined by testing for the number of eigenvalues of $\CB$ that are greater than one. Such a test is formulated in Section \ref{sec:Eval_test} using bootstrap-based hypothesis testing. 

 One of the assumptions in Theorem \ref{theorem:eval} that needs further discussion  is that if the $i$th components of four or more data sets are correlated (i.e., $k^{(i)} \geq 4$), then  the correlation coefficient between any pair of $i$th signal components  must be either zero or greater than $\epsilon^{(i)} = (\frac{k^{(i)}-1}{k^{(i)}})^2$. This assumption guarantees that only one eigenvalue of $\CB$ corresponding to the $i$th component is greater than one. The threshold, $\epsilon^{(i)}$, is derived in Appendix \ref{app:lem1} and is a restrictive threshold since $\lim_{k^{(i)} \rightarrow \infty} \epsilon^{(i)} = 1$. However, the proof does not claim to represent all matrices with the desired eigenvalue structure. That is, there is a nonempty set of real positive hollow symmetric matrices that have exactly one positive eigenvalue but do not meet this element-wise threshold. One example is the following: Suppose the nonzero correlation coefficients associated with the $i$th component are equal for all $k^{(i)} (\geq 4)$ data sets.  That is,  
$\rho_{pq}^{(i)} = \rho^{(i)}$ $\forall p,q \in \mathcal{K}^{(i)}$, where $\mathcal{K}^{(i)}$ is the subset of indices associated with the data sets whose $i$th components are correlated.  In this case, 
\begin{equation}
\label{eq:B_homog}
\HB^{(i)} = \begin{bmatrix}
0 & \rho^{(i)}  & \cdots  & \rho^{(i)} \\
\rho^{(i)} & 0 &  & \vdots  \\
\vdots & & \ddots  & \vdots \\
\rho^{(i)} & \cdots & \cdots &  0
\end{bmatrix},
\end{equation}
as defined in the proof of Theorem \ref{theorem:eval}. This can be simplified as $\HB^{(i)} = \rho^{(i)} \mathbf{1}\mathbf{1}^T - \rho^{(i)}\IB $, where $\mathbf{1} \in \mathbb{R}^{k^{(i)}}$ is a vector with all elements equal to one. The maximum eigenvalue of $\HB^{(i)}$ is $(k^{(i)}-1)\rho^{(i)}$ and the remaining $k^{(i)}-1$ eigenvalues are $-\rho^{(i)}$. Therefore, $\HB^{(i)}$ has one positive eigenvalue for any $\rho^{(i)} > 0$.   In this example, the relationship between the eigenvalues of $\CB$ and the number of signals with nonzero correlations described by Theorem \ref{theorem:eval} holds true for $0 < \rho_{pq}^{(i)} \leq 1$. 

In the general case, even though $\epsilon^{(i)}$ is restrictive, an element-wise threshold like this is perhaps the best that can be hoped for without imposing further constraints on the structure of the correlation among the components.  As noted in \cite{charles2013nonpositive}, for any $k \in \mathbb{N}$ with $k \geq 3$, there exists a positive hollow symmetric $\HB \in \mathbb{R}^{k \times k}$ such that $\HB$ has only two nonpositive eigenvalues. That is to say, without the element-wise constraint there will always be feasible correlation structures for which $\CB$ has more eigenvalues greater than one than signals with nonzero correlations. Moreover, as we will see in our numerical examples later, our hypothesis-test based techniques presented in Section \ref{sec:bt_ht} may still perform satisfactorily even in cases where the assumptions of Theorem \ref{theorem:eval} are violated.

As an immediate consequence of Theorem \ref{theorem:eval}, any eigenvalue of $\CB$ that is equal to the maximum possible value, $P$, identifies a signal component where all $P$ data sets are perfectly correlated.  This is shown in Corollary \ref{cor1}.
\begin{corollary}
\label{cor1}
If any eigenvalue of $\CB$, $\lambda^{(i)}$, is equal to $P$, there exists an $i \in \mathcal{D}$ such that the correlation between $s_p^{(i)}$ and $s_q^{(i)}$ is one for all $p,q=1,\ldots,P$. 
\end{corollary}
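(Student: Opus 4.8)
The plan is to reuse the block decomposition established in the proof of Theorem~\ref{theorem:eval}: the eigenvalues of $\CB$ coincide with those of $\widetilde{\RB}_{ss} = \text{blkdiag}(\RB^{(1)},\ldots,\RB^{(n)})$, and each $\RB^{(i)} = \text{blkdiag}(\BB^{(i)},\IB)$ with $\BB^{(i)}\in\mathbb{R}^{k^{(i)}\times k^{(i)}}$ a genuine correlation matrix (symmetric, unit diagonal, and positive semidefinite as a principal submatrix of the covariance matrix $\RB^{(i)}$). Since $\RB^{(i)}$ is $P\times P$ with unit diagonal, its eigenvalues are nonnegative and sum to $\text{tr}(\RB^{(i)}) = P$; in particular $P$ is the largest value any eigenvalue of $\CB$ can attain, which is exactly the saturating case we are handed.

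First I would observe that if some eigenvalue $\lambda^{(i)}$ of $\CB$ equals $P$, it cannot come from an identity block (that contributes only the eigenvalue $1 < P$), so it is an eigenvalue of some $\BB^{(i)}$. The key step is then the trace argument: the nonnegative eigenvalues of $\RB^{(i)}$ sum to $P$, so if one of them equals $P$ every other eigenvalue must vanish; but the identity block of $\RB^{(i)}$ would contribute $(P-k^{(i)})$ eigenvalues equal to $1$, forcing $k^{(i)} = P$ (all $P$ data sets have their $i$th components correlated) and making $\RB^{(i)} = \BB^{(i)}$ rank one. Writing $\BB^{(i)} = \vB\vB^T$, the unit diagonal gives $v_p^2 = 1$, hence $v_p \in \{-1,+1\}$ and $\rho_{pq}^{(i)} = v_p v_q \in \{-1,+1\}$ for every $p\neq q$.

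Finally I would invoke the positivity of the nonzero correlation coefficients that is in force throughout Theorem~\ref{theorem:eval} (explicit for $k^{(i)}\geq 4$ via the threshold $\epsilon^{(i)}$, and assumed for $k^{(i)}\leq 3$ in the case analysis of its proof): since every $\rho_{pq}^{(i)}$ here is nonzero it is positive, so $v_p v_q > 0$ for all $p\neq q$, the entries of $\vB$ all share one sign, and therefore $\rho_{pq}^{(i)} = 1$ for all $p,q=1,\ldots,P$. That $i\in\mathcal{D}$ is immediate since nonzero correlation is present for this index.

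The only real obstacle is recognizing that the positive-semidefiniteness/trace bound does all the heavy lifting: once $P$ is seen to saturate the trace of the $k^{(i)}\times k^{(i)}$ correlation matrix $\BB^{(i)}$ with $k^{(i)}\leq P$, rank-oneness — and hence perfect correlation — follows at once, and the remaining sign bookkeeping is trivial under the positivity convention already adopted.
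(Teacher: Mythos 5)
Your proof is correct and follows essentially the same route as the paper's: reduce to a diagonal block $\RB^{(i)}$ with unit diagonal, use the trace argument together with positive semidefiniteness to force rank one, factor $\RB^{(i)}$ as an outer product, and use the positivity of the correlation coefficients to fix the signs. You are in fact slightly more explicit than the paper in ruling out the identity sub-block (thereby forcing $k^{(i)}=P$) and in noting that PSD-ness is what lets the remaining eigenvalues be concluded to vanish rather than merely sum to zero.
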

\begin{proof}
By Theorem \ref{theorem:eval}, if $\lambda^{(i)} > 1$  is an eigenvalue of $\CB$, there exists an $\RB^{(i)} \neq \IB$ whose largest eigenvalue is equal to $\lambda^{(i)}$. The diagonal elements of $\mathbf{R}^{(i)}$ are equal to one by definition, so $\text{trace}(\RB^{(i)}) = P = \sum_{j=1}^{P} \lambda_j^{(i)}$,
where $\lambda_j^{(i)}$ is the $j$th largest eigenvalue of $\RB^{(i)}$.  Since the largest eigenvalue of $\RB^{(i)}$, $\lambda_1^{(i)} = \lambda^{(i)} = P$, we have $\lambda^{(i)}_2 = \ldots = \lambda^{(i)}_P = 0,$ and thus, the rank of $\RB^{(i)}$ is one. 

Let $\RB^{(i)} = \wB \wB^T$ be a rank-one matrix, where $\wB = [w_1, \ldots, w_P]^T \in \mathbb{R}^P$. The diagonal elements of $\RB^{(i)}$ are equal to one, implying that $w_p^2 = 1$ for all $p$. Since the off-diagonal elements are bounded by zero and one, $w_p$ is positive for $p=1,\ldots,P$. Thus, $w_1 = w_2 = \ldots = w_P = 1$ is the only solution for $\wB,$ and $\RB^{(i)} = \mathbf{1} \mathbf{1}^T$. Therefore, $\rho^{(i)}_{pq} = 1 \forall p,q$ and the $i$th component of each data set is perfectly correlated with the $i$th component of all other data sets. 
\end{proof}

\subsection{Eigenvectors of $\CB$}
\label{sec:Evec of Cxx}

The eigenvalues of $\CB$ provide information about the dimension of the correlated subspace, but identifying exactly which data sets demonstrate correlation in a particular component requires more information than this summary contains. The eigenvectors of $\CB$, on the other hand, contain as their elements the coefficients for constructing the correlated signals from each data set that correspond to the associated eigenvalue (that is greater than one). Data sets connected to the nonzero elements of an eigenvector are then the ones whose components are correlated among the associated group of components. 

\begin{theorem}
\label{theorem:evec}
Let $\CB$ be the composite coherence matrix of $P$ data sets constructed according to the linear mixing model in \eqref{eq:DataModel} with pairwise diagonal signal cross-covariance matrices. Let $k^{(i)}$ be the number of data sets whose $i$th components are correlated. Assume that correlations are transitive, and for $k^{(i)} \geq 4$, each correlation coefficient is either $\rho_{pq}^{(i)} = 0$ or $\rho_{pq}^{(i)} > \epsilon^{(i)} = (\frac{k^{(i)}-1}{k^{(i)}})^2$ for all $p,q$.
Let $\CB \uB^{(i)} = \lambda^{(i)} \uB^{(i)}$ such that $\lambda^{(i)} > 1$ is an eigenvalue with algebraic multiplicity of one, and let  
the eigenvector $\uB^{(i)}$ be partitioned into $P$ subvectors, $\uB^{(i)} = [\uB_1^{(i)T}, \uB_2^{(i)T}, ..., \uB_P^{(i)T}]^T,$ where $\uB_p^{(i)} \in \mathbb{R}^{n}$ contains the elements of $\uB^{(i)}$ associated with the $p$th data set.  Then the $i$th signal component in the $p$th data set is among the group of correlated $i$th components if and only if $\uB_p^{(i)} \neq \mathbf{0}$. 
\end{theorem}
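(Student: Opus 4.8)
The plan is to transport the eigenvector problem through the orthogonal diagonalization already set up in the proof of Theorem~\ref{theorem:eval} and then to finish with a Perron--Frobenius argument on a single block. First I would recall from \eqref{eq:Cxx_perm} that $\CB=(\FB\PB^{T})\,\widetilde{\RB}_{ss}\,(\FB\PB^{T})^{T}$ with $\FB\PB^{T}$ orthogonal and $\widetilde{\RB}_{ss}=\text{blkdiag}(\RB^{(1)},\dots,\RB^{(n)})$. Consequently $(\lambda^{(i)},\uB^{(i)})$ is an eigenpair of $\CB$ precisely when $\gB:=\PB\FB^{T}\uB^{(i)}$ is an eigenvector of the block-diagonal matrix $\widetilde{\RB}_{ss}$ for the same eigenvalue. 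Because $\lambda^{(i)}>1$ has algebraic multiplicity one, $\gB$ must be supported on exactly one diagonal block $\RB^{(i)}$ --- this is what fixes the component index $i$ --- and on that block $\gB$ restricts to an eigenvector $\vB^{(i)}\in\mathbb{R}^{P}$ of $\RB^{(i)}$, carrying one entry $v^{(i)}_{p}$ per data set $p$.

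Next I would unwind the transformation entrywise. Since $\gB$ is zero outside its $i$th (component-major) block, applying $\PB^{T}$ produces the data-set-major vector whose $p$th length-$n$ block equals $v^{(i)}_{p}\eB_{i}$, and then left-multiplication by $\FB=\text{blkdiag}(\FB_{1},\dots,\FB_{P})$ gives $\uB^{(i)}=\FB\PB^{T}\gB$ with $p$th block $\uB_{p}^{(i)}=v^{(i)}_{p}\,\FB_{p}\eB_{i}$, i.e.\ $v^{(i)}_{p}$ times the $i$th column of $\FB_{p}$. Each $\FB_{p}=(\AB_{p}\AB_{p}^{T})^{-1/2}\AB_{p}$ is square and orthogonal, hence invertible, so its $i$th column is nonzero; therefore $\uB_{p}^{(i)}\neq\BZ$ if and only if the scalar $v^{(i)}_{p}\neq 0$. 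The theorem then reduces to showing that $v^{(i)}_{p}\neq 0$ exactly when $p\in\mathcal{K}^{(i)}$, the set of data sets whose $i$th components are correlated.

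The remaining, and hardest, step is that claim about $\vB^{(i)}$. Reordering the data sets so the $k^{(i)}$ correlated ones come first, write $\RB^{(i)}=\text{blkdiag}(\BB^{(i)},\IB)$ as in \eqref{eq:Ri} with $\BB^{(i)}=\IB+\HB^{(i)}$. For $p\notin\mathcal{K}^{(i)}$ the index $p$ lies in the trailing identity block, whose only eigenvalue is $1\neq\lambda^{(i)}$, so $\vB^{(i)}$ vanishes in that coordinate and $\uB_{p}^{(i)}=\BZ$. For $p\in\mathcal{K}^{(i)}$ I would invoke Perron--Frobenius: by transitivity of the correlations (together with the single-block reduction used in Theorem~\ref{theorem:eval}) every off-diagonal entry of $\BB^{(i)}$ is a strictly positive correlation coefficient, so $\BB^{(i)}$ is entrywise positive; and since Theorem~\ref{theorem:eval} already guarantees that $\BB^{(i)}$ has exactly one eigenvalue exceeding one, $\lambda^{(i)}$ must be its dominant (Perron) eigenvalue, whose eigenvector has all entries nonzero (indeed strictly positive). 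Hence $v^{(i)}_{p}\neq 0$, so $\uB_{p}^{(i)}\neq\BZ$, for every correlated data set $p$, which establishes both implications. I expect the main obstacles to be the careful bookkeeping of the two interleaved permutations ($\PB$ and the within-block reordering of data sets) and the clean derivation of the positivity hypothesis of Perron--Frobenius from transitivity; the genuine use of the correlation-structure assumptions lies entirely in ruling out spurious zeros of the Perron eigenvector.
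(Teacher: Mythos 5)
Your proposal is correct and follows essentially the same route as the paper: block-diagonalize $\CB$ via the orthogonal matrix $\FB\PB^{T}$, use the multiplicity-one assumption to localize the eigenvector to a single block $\RB^{(i)}$, and apply Perron--Frobenius to $\BB^{(i)}$ to rule out zero entries on the correlated data sets while the trailing identity block forces zeros on the uncorrelated ones. Your explicit unwinding $\uB_{p}^{(i)}=v_{p}^{(i)}\FB_{p}\eB_{i}$ with $\FB_{p}$ invertible is slightly more careful than the paper's corresponding step, but it is the same argument.
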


\begin{proof}
Let the $i$th components of $k^{(i)}$ data sets be correlated, and  let $\mathcal{K}^{(i)} \subseteq \mathcal{P} = \{1,\ldots,P \}$ be the subset of indices associated with these correlated data sets so that $|\mathcal{K}^{(i)}|=k^{(i)}$. Suppose the $i$th signal component in the $p$th data set is among the correlated components, i.e., $p$ is an element of $\mathcal{K}^{(i)}$. 
 Then there exists a permutation as described by \eqref{eq:Cxx_perm} where $\RB^{(i)} \in \mathbb{R}^{P \times P} \neq \IB$ such that  
\begin{equation}
\label{eq:Rievec}
\RB^{(i)} = \begin{bmatrix}
\BB^{(i)} & \mathbf{0} \\
\mathbf{0} & \IB
\end{bmatrix}, 
\end{equation}
where $\BB^{(i)} \in \mathbb{R}^{k^{(i)} \times k^{(i)}}$ is a symmetric and element-wise positive matrix that contains the correlation coefficients between the correlated $i$th components.  According to Theorem 1, the largest eigenvalue of $\BB^{(i)}$ is greater than one and satisfies $\BB^{(i)} \vB^{(i)} = \lambda_{\text{max}}^{(i)} \vB^{(i)},$ where $\vB^{(i)}$ is the eigenvector of $\BB^{(i)}$ associated with $\lambda_{\text{max}}^{(i)}$. According to the Perron-Frobenius theorem, all the entries of $\vB^{(i)}$ are positive \cite{bellman1997introduction}.
Since $\RB^{(i)}$ is block-diagonal with blocks $\BB^{(i)}$ and $\IB$, $\lambda_{\text{max}}^{(i)}$ is also the largest eigenvalue of $\RB^{(i)}$ with the eigenvector 
\begin{equation}
\tilde{\vB}^{(i)} = \begin{bmatrix} 
{\vB}^{(i)} \\ \mathbf{0} \end{bmatrix},
\end{equation}
where $\mathbf{0}$ is a zero vector of dimensions $P-k^{(i)}$.   

$\RB^{(i)}$ is a block of $\widetilde{\RB}_{ss}$ as defined in \eqref{eq:Rsst}, and thus $\lambda_{\max}^{(i)}$ is also an eigenvalue of  $\widetilde{\RB}_{ss}$. Each eigenvector, $\tilde{\uB}^{(i)}$, of $\widetilde{\RB}_{ss}$ can be partitioned into $n$ subvectors where the elements of the $i$th subvector correspond to the $i$th block of $\widetilde{\RB}_{ss}$.  Since $\lambda_{\text{max}}^{(i)}$ has an algebraic multiplicity of one, the eigenvector of $\widetilde{\RB}_{ss}$ associated with $\lambda_{\text{max}}^{(i)}$ has $\widetilde{\vB}^{(i)}$ at the $i$th position and zeros everywhere else and  can be written as $\tilde{\uB}^{(i)} = [\mathbf{0}, \ldots, \tilde{\vB}^{(i)T}, \ldots, \mathbf{0}]^T$.  
Using \eqref{eq:Cxx_perm}, the eigenvector of $\CB$ associated with $\lambda_{\text{max}}^{(i)}$ is related to $\tilde{\uB}^{(i)}$ by 
\begin{equation}
\label{eq:eveccxx}
\uB^{(i)} = \FB \PB^T \tilde{\uB}^{(i)}.  
  \end{equation}
For each $p\in\mathcal{K}^{(i)}$, the element of $\tilde{\vB}^{(i)}$ corresponding to the $p$th data set is strictly positive. Therefore, from the definition of $\tilde{\uB}^{(i)}$ and \eqref{eq:eveccxx},  ${\mathbf{u}}^{(i)}_p \neq \mathbf{0}$ as desired. 

To show the other implication, suppose that the $i$th signal component in the $p$th data set is uncorrelated among the $i$th group of components. Thus, $p$ is not an element of $\mathcal{K}^{(i)}$. Therefore, the $p$th data set is not represented in $\vB^{(i)}$ but rather corresponds to one of the elements in the zero vector of $\tilde{\vB}^{(i)}$. Using the definition of $\tilde{\uB}^{(i)}$  and \eqref{eq:eveccxx}, it is easy to see that the $p$th part of $\uB^{(i)}, \uB_p^{(i)}  = \mathbf{0}$. This implies that if the $p$th part of $\uB^{(i)}$ associated with $\lambda^{(i)}>1$, $\uB_p^{(i)} \neq \mathbf{0}$, then the $i$th signal component in the $p$th data set is among the group of correlated $i$th components as desired. 
\end{proof}

Theorem \ref{theorem:evec} assumes that any eigenvalue of $\CB$ that is greater than one has an algebraic multiplicity of one. This is not a necessary but a sufficient condition. Appendix \ref{app2} discusses its sufficiency and also scenarios in which the correlation structure can be completely determined using Theorem \ref{theorem:evec} even when this assumption is not true. 

As a result of Theorems \ref{theorem:eval} and \ref{theorem:evec}, if the $i$th eigenvalue of $\CB$ is greater than one and is unique as an eigenvalue, the existence of correlation associated with the $i$th component of the $p$th data set can be determined by testing the hypothesis $\mathbf{u}^{(i)}_p = \mathbf{0}$. A bootstrap-based hypothesis test for this purpose is proposed in Section \ref{sec:Evec_test}.  

\section{Bootstrap-based Tests for Eigenvalues and Eigenvectors of  $\CB$}
\label{sec:bt_ht}
\subsection{Test for detecting the eigenvalues of $\CB$ representing the correlated subspace}
\label{sec:Eval_test}

Theorem 1 allows us to infer that the number of eigenvalues of $\CB$ greater than one is equal to the dimension of the correlated subspace, $d$. In practice, however, the composite coherence matrix, $\CB$, is unknown and has to be estimated from the samples. As a result, the number of eigenvalues of the sample composite coherence matrix that are greater than one will often not equal the dimension of the correlated subspace, $d$. This inconsistency is addressed in the related model-order selection literature by setting a threshold for the eigenvalues that is determined with an information-theoretic criterion (ITC) or hypothesis testing. The algorithm presented here fits in the category of  hypothesis testing. 

A popular approach to solve the problem of model-order selection using hypothesis testing is to perform a series of binary hypothesis tests and stop when a certain condition is met. In this context, this means starting with a counter $s = 0$ and performing the following binary test of null hypothesis $H_0$ and alternative $H_1$
\begin{align}
\label{eq:btest}
H_0 &: \text{{$d=s$}}, \notag \\
H_1 &: \text{{$d>s$}}.
\end{align}
If $H_0$ is rejected, $s$ is incremented and another test of $H_0$ vs. $H_{1}$ is run. This is repeated until $H_0$ is retained or $s$ reaches its maximum possible value.

The binary test in \eqref{eq:btest} requires a statistic whose (asymptotic) distribution under $H_0$ is theoretically known or estimated from the samples. The distribution of the statistic can be derived assuming a particular correlation structure among the components \cite{hasija2016detecting}, but for arbitrary correlation structures the distribution is unknown. In this work, we use the bootstrap technique to estimate this distribution.    

Let the eigenvalues of $\CB$ be arranged with ordering ${\lambda}^{(1)} \geq {\lambda}^{(2)}  \ldots \geq {\lambda}^{(nP)}$. To estimate $d$, we propose a statistic that is based on testing whether the eigenvalues of $\CB$ following its $d$ largest eigenvalues are equal to one. If a signal component in one data set is not correlated with the signal components of any other data set, at least one eigenvalue of C is one. In order to increase the power and stability of the test, we make the stronger assumption that each data set has at least one signal component that is completely uncorrelated in this manner, i.e., $d_{pq} < n$ for all $p$,$q$. Therefore, $\CB$ has at least $P$ eigenvalues equal to one and the null hypothesis for each test is
\begin{equation}
\label{eq:nullhyp}
H_0 : \lambda^{(s+1)} = \lambda^{(s+2)} = \ldots  = \lambda^{(s+P)} = 1. 
\end{equation}
Note that we cannot include all $nP-d$ eigenvalues following the $d$ largest eigenvalues of $\CB$ since some of them are less than one. 
The null hypothesis is rejected when the proposed test statistic,
\begin{equation}
\label{eq:Tstatistic}
T(s) = \sum_{i=s+1}^{s+P} (\lambda^{(i)} - 1)^2  , 
\end{equation}
is sufficiently far from zero. The distribution of $T(s)$ under $H_0$ is estimated using the bootstrap \cite{tubiblio66457}.

In Algorithm \ref{alg:eval_test}, we present an algorithm for estimating $d$ using $T{(s)}$ given $M$  joint samples of data sets $\xB_1, \ldots, \xB_P$.  
The $M$ samples of each data set form the columns of the data matrices $\mathbf{X}_1,\ldots,\mathbf{X}_P$. The $P$ data matrices along with the number of bootstrap resamples, $B$, and the probability of false alarm, $P_\text{fa}$, are the inputs of Algorithm \ref{alg:eval_test}.  

\begin{algorithm} 
	\caption{Estimator for the dimension of the correlated subspace of $P$ data sets} 
	\label{alg:eval_test}
	\begin{spacing}{1.3}
	\begin{algorithmic}[1]
	{\fontsize{9.2}{9.2}\selectfont
	 \State \textbf{Input} $\big\{\tensor*[]{\mathbf{X}}{_p}\big\}_{p=1}^P$ : observations \vskip0.1ex \text{  }$B$: number of bootstrap resamples  \vskip0.1ex  \text{  }$P_\text{fa}$: probability of false alarm
	  \State \textbf{Output} $\hat{d}$: dimension of correlated subspace
	  \Function{\textsc{CorrDim}}{$\big\{\tensor*[]{\mathbf{X}}{_p}\big\}_{p=1}^P,B,\tensor*[]{P}{_{\textrm{fa}}}$} 
	  
	  	  \State $\mathbf{\widehat{R}}_{D} \gets \frac{1}{M} \textrm{blkdiag}\big(  \tensor*[]{\mathbf{X}}{_{1}}   \tensor*[]{\mathbf{X}}{_1^T},  \ldots ,\tensor*[]{\mathbf{X}}{_{P}}\tensor*[]{\mathbf{X}}{_{P}^T}\big)$ 

	   \State $\mathbf{X} \gets \big[ \tensor*[]{\mathbf{X}}{_1^T}, \ldots , \tensor*[]{\mathbf{X}}{_P^T}\big]^T$ 
	   
	   \State $ \mathbf{\widehat{R}} \gets \frac{1}{M} \mathbf{X} \mathbf{X}^T$
	   
	   \State $\mathbf{\widehat{C}} \gets	  \tensor*[]{\mathbf{\widehat{R}}}{_{D} ^{\nicefrac{-1}{2}}}  \mathbf{\widehat{R}}  \tensor*[]{\mathbf{\widehat{R}}}{_{D} ^{\nicefrac{-1}{2}}} $ 
	   
	   \State $\pmb{\hat{\lambda}} \gets \textrm{eigenvalues}\big(\mathbf{\widehat{C}}\big)$ \Comment{s.t. $\tensor*[]{\hat{\lambda}}{^{(1)}} \geq \cdots \geq \tensor*[]{\hat{\lambda}}{^{(nP)}}$}
	   
	   \For{$b=1,\ldots,B$} \Comment{bootstrap resamplings indexed by left subscript}
	   
	   \For{$m=1,\ldots,M$}
	   
	   \State $\tensor*[_{b}]{j}{_m} \gets \textrm{random integer } [1,M]$ \Comment{resampling indices chosen with replacement}	
	   \EndFor
	   
	   \For{$p=1,\ldots,P$}
	   
	   \State $\tensor*[_b]{\mathbf{X}}{_{p}} \gets \big[ \tensor*[]{\mathbf{x}}{_p}(\tensor*[_{b}]{j}{_1}), \ldots,\tensor*[]{\mathbf{x}}{_p}(\tensor*[_{b}]{j}{_M})\big]$  

	   \EndFor
	   
	   \State $\tensor*[_{b}]{\mathbf{\widehat{R}}}{_{D}} \gets \frac{1}{M} \textrm{blkdiag}\big( \tensor*[_{b}]{\mathbf{X}}{_1} \tensor*[_{b}]{\mathbf{X}}{_1^T} ,  \ldots , \tensor*[_{b}]{\mathbf{X}}{_P}\tensor*[_{b}]{\mathbf{X}}{_P^T} \big)$ 
	   
	   \State $\tensor*[_{b}]{\mathbf{X}}{}  \gets \big[\tensor*[_{b}]{\mathbf{X}}{_1^T} , \ldots , \tensor*[_{b}]{\mathbf{X}}{_P^T} \big]^T$ 
	   	   
	   \State $ \tensor*[_{b}]{\mathbf{\widehat{R}}}{} \gets \frac{1}{M} \tensor*[_{b}]{\mathbf{X}}{}  \tensor*[_{b}]{\mathbf{X}}{^T} $
	   
	   \State $\tensor*[_{b}]{\mathbf{\widehat{C}}}{} \gets	  \tensor*[_{b}]{\mathbf{\widehat{R}}}{_{D}^{\nicefrac{-1}{2}}}  \tensor*[_{b}]{\mathbf{\widehat{R}}}{}   \tensor*[_{b}]{\mathbf{\widehat{R}}}{_{D}^{\nicefrac{-1}{2}}} $ 
	   
	   \State $\tensor*[_{b}]{\pmb{\hat{\lambda}}}{} \gets \textrm{eigenvalues}\big( \tensor*[_{b}]{\mathbf{\widehat{C}}}{} \big)$ \Comment{s.t. $ \tensor*[_{b}]{\hat{\lambda}}{^{(1)}} \geq  \cdots \geq \tensor*[_{b}]{\hat{\lambda}}{^{(nP)}} $}
	   
	   \EndFor
	   
	   \State $s_{\max} \gets n-1$
	   
		\For{$s=0,\ldots,s_{\max}$}
		
		\State $T(s) \gets \sum_{i=s+1}^{s+P} (\hat{\lambda}^{(i)}-1)^2$ \Comment{compute test statistic for $H_0: d = s$}
		
		\For{$b=1,\ldots,B$}
		
		\State $\tensor*[_{b}]{T}{}(s) \gets \sum_{i=s+1}^{s+P} \big(\tensor*[_{b}]{\hat{\lambda}}{^{(i)}}-1\big)^2$ \Comment{estimate distribution of $T(s)$ under $H_0$}
		
		\EndFor
		\State $P(s) \gets \frac{1}{B} \sum_{b=1}^B I\big( T(s) \leq | \tensor*[_{b}]{T}{}(s)  - T(s)| \big)$ \Comment{$I(\cdot)$ is the indicator function}
		
		\EndFor \\

		\Return  $\displaystyle \hat{d} \gets \min \big\{\argmin_{s=0,\ldots,s_{\max}} P(s)  \geq \tensor*[]{P}{_{\textrm{fa}}}, n-1 \big\}$
		
		\EndFunction }
	\end{algorithmic}
	\end{spacing} 
\end{algorithm}

\subsection{Test for eigenvectors of $\CB$  corresponding to correlated components} 
\label{sec:Evec_test}
In addition to identifying the dimension of the correlated subspace, $d$, our stated goal is to estimate the structure of the correlations between the collection of data sets. As a result of  Theorem \ref{theorem:evec}, we need to identify the values of $i$ and $p$ for which the subvector $\uB_{p}^{(i)}= \mathbf{0}$ in order to determine which data sets have an uncorrelated $i$th signal component. However, we still do not have direct access to the composite coherence matrix.  When $\CB$ is estimated from samples, these subvectors will not be exactly zero. Thus we propose a novel method for identifying multiset correlation structure that uses a bootstrap-based test to detect zero subvectors in the eigenvectors of $\CB$. 

Assuming ${d}$ correlated components (which can be estimated via Algorithm \ref{alg:eval_test}), the technique tests the ${d}$ eigenvectors associated with the $d$ largest eigenvalues of the sample composite coherence matrix. For $i = 1 \ldots {d}$ and $p = 1 \ldots P$ we test the hypotheses
\begin{align}
\label{eq:evec_btest}
H_0 &: \text{{$\mathbf{u}_p^{(i)} = \mathbf{0}$}}, \notag \\
H_1 &: \text{{$\mathbf{u}_p^{(i)} \neq \mathbf{0}$}}.  
\end{align}
The null hypothesis is rejected when the squared Euclidean norm of $\mathbf{u}_p^{(i)}$, 
\begin{equation}
\label{eq:Tstatistic_evec}
T = \| {\uB}_p^{(i)} \|^2, 
\end{equation} 
is sufficiently far from zero. The distribution of the statistic $T$ under the null hypothesis is estimated using a bootstrap akin to the one in Algorithm \ref{alg:eval_test}. The full procedure is given in Algorithm \ref{alg:evec_test}. 

\begin{algorithm} 
	\caption{Estimator for the correlation structure of $P$ data sets} 
	\label{alg:evec_test}
	\begin{spacing}{1.3}
	\begin{algorithmic}[1]
	{\fontsize{9.2}{9.2}\selectfont
	\State \textbf{Input} $\big\{\tensor*[]{\mathbf{X}}{_p}\big\}_{p=1}^P$ : observations \vskip0.1ex \text{  }$\hat{d}$: dimension of correlated subspace \vskip0.1ex \text{  }$B$: number of bootstrap resamples  \vskip0.1ex  \text{  }$P_\text{fa}$: probability of false alarm 
	  \State \textbf{Output} $\widehat{\mathbf{Z}}$ : correlation map
	  
	  \Function{\textsc{CorrStruc}}{$\left\{\mathbf{X}_p\right\}_{p=1}^P,\hat{d},B,P_{\textrm{fa}}$} 
	  
	  \State $\widehat{\mathbf{Z}} \gets [ \mathbf{1} ] \in \mathbb{R}^{\hat{d} \times {P\choose 2}}$  \Comment{rows indexed by signal components, columns by pairs of data sets in lexicographical order}
	  
	  	  \State $\mathbf{\widehat{R}}_{D} \gets \frac{1}{M} \textrm{blkdiag}\big(  \tensor*[]{\mathbf{X}}{_{1}}   \tensor*[]{\mathbf{X}}{_1^T},  \ldots ,\tensor*[]{\mathbf{X}}{_{P}}\tensor*[]{\mathbf{X}}{_{P}^T}\big)$ 

	   \State $\mathbf{X} \gets \big[ \tensor*[]{\mathbf{X}}{_1^T},  \ldots , \tensor*[]{\mathbf{X}}{_P^T}\big]^T$ 
	   
	   \State $ \mathbf{\widehat{R}} \gets \frac{1}{M} \mathbf{X} \mathbf{X}^T$
	   
	   \State $\mathbf{\widehat{C}} \gets	  \tensor*[]{\mathbf{\widehat{R}}}{_{D} ^{\nicefrac{-1}{2}}}  \mathbf{\widehat{R}}  \tensor*[]{\mathbf{\widehat{R}}}{_{D} ^{\nicefrac{-1}{2}}} $ 
	   \For{$i=1,\ldots,\hat{d}$}
	   \State $ \mathbf{\hat{u}}^{(i)} \gets \textrm{eigenvector}(\mathbf{\widehat{C}})$ \Comment{ordered by associated eigenvalue s.t. $\tensor*[]{\hat{\lambda}}{^{(1)}} \geq  \cdots \geq \tensor*[]{\hat{\lambda}}{^{(\hat{d})}}$} \vskip0.3ex
	   \State $\tensor*[]{\mathbf{\hat{u}}}{^{(i)}} =\left[\tensor*[]{\mathbf{\hat{u}}}{_1^{(i)T}}, \ldots, \tensor*[]{\mathbf{\hat{u}}}{_P^{(i)T}}  \right]^T \textrm{ with } \tensor*[]{\mathbf{\hat{u}}}{_p^{(i)}}  \in \mathbb{R}^n \hspace{.25em} \forall p$ 
	   \EndFor

 \For{$b=1,\ldots,B$} \Comment{bootstrap resamplings indexed by left subscript}
	   
	   \For{$m=1,\ldots,M$}
	   
	   \State $\tensor*[_{b}]{j}{_m} \gets \textrm{random integer } [1,M]$ \Comment{resampling indices chosen with replacement}	
	   \EndFor
	   
	   \For{$p=1,\ldots,P$}
	   
	   \State $\tensor*[_b]{\mathbf{X}}{_{p}} \gets \big[ \tensor*[]{\mathbf{x}}{_p}(\tensor*[_{b}]{j}{_1}), \ldots,\tensor*[]{\mathbf{x}}{_p}(\tensor*[_{b}]{j}{_M})\big]$ 

	   \EndFor
	   
	   \State $\tensor*[_{b}]{\mathbf{\widehat{R}}}{_{D}} \gets \frac{1}{M} \textrm{blkdiag}\big( \tensor*[_{b}]{\mathbf{X}}{_1} \tensor*[_{b}]{\mathbf{X}}{_1^T} ,  \ldots , \tensor*[_{b}]{\mathbf{X}}{_P}\tensor*[_{b}]{\mathbf{X}}{_P^T} \big)$ 
	   
	   \State $\tensor*[_{b}]{\mathbf{X}}{}  \gets \big[\tensor*[_{b}]{\mathbf{X}}{_1^T} , \ldots , \tensor*[_{b}]{\mathbf{X}}{_P^T} \big]^T$ 
	   
	   \State $ \tensor*[_{b}]{\mathbf{\widehat{R}}}{} \gets \frac{1}{M} \tensor*[_{b}]{\mathbf{X}}{}  \tensor*[_{b}]{\mathbf{X}}{^T} $
	   
	   \State $\tensor*[_{b}]{\mathbf{\widehat{C}}}{} \gets	  \tensor*[_{b}]{\mathbf{\widehat{R}}}{_{D}^{\nicefrac{-1}{2}}}  \tensor*[_{b}]{\mathbf{\widehat{R}}}{}   \tensor*[_{b}]{\mathbf{\widehat{R}}}{_{D}^{\nicefrac{-1}{2}}} $
	   
	   	   \For{$i=1,\ldots,\hat{d}$}
	   \State $ \tensor*[_{b}]{\mathbf{\hat{u}}}{^{(i)}}  \gets \textrm{eigenvector}(\tensor*[_{b}]{\mathbf{\widehat{C}}}{})$ \Comment{ordered by associated eigenvalue s.t. $ \tensor*[_{b}]{\hat{\lambda}}{^{(1)}} \geq \cdots \geq \tensor*[_{b}]{\hat{\lambda}}{^{(\hat{d})}}$ } 
	   \State {
	    $\tensor*[_b]{\mathbf{\hat{u}}}{^{(i)}} =\left[\tensor*[_b]{\mathbf{\hat{u}}}{_1^{(i)T}}, \ldots, \tensor*[_b]{\mathbf{\hat{u}}}{_P^{(i)T}}  \right]^T \textrm{with } \tensor*[_b]{\mathbf{\hat{u}}}{_p^{(i)}}  \in \mathbb{R}^n \hspace{.25em} \forall p$} 
	   \EndFor 
	   
	   \EndFor
	   
	   \For{$i = 1, \ldots, \hat{d}$}
	   \For{$p = 1, \ldots, P$}
	   \State $\tensor*[]{T}{} \gets \| \tensor*[]{\mathbf{\hat{u}}}{_p^{(i)}} \|^2$ \Comment{compute test statistic for $H_0: \| \tensor*[]{\mathbf{\hat{u}}}{_p^{(i)}} \|^2 = 0$}
	   \For{$b=1,\ldots,B$}
	   \State $\tensor*[_b]{T}{} \gets \| \tensor*[_b]{\mathbf{\hat{u}}}{_p^{(i)}} \|^2$ \Comment{estimate distribution of $\tensor*[]{T}{} $ under $H_0$}
	   \EndFor
	   \State $\tensor*[]{P}{} \gets \frac{1}{B} \sum_{b=1}^B I\big( \tensor*[]{T}{} \leq | \tensor*[_b]{T}{}  - \tensor*[]{T}{}| \big)$ \Comment{$I(\cdot)$ is the indicator function}
	   	   \If{$\tensor*[]{P}{} \geq \tensor*[]{P}{_{\textrm{fa}}}$}
	   \State $\mathbf{\widehat{Z}}(i,j\{p,q\}) \gets 0 \hspace{.25em} \forall q$ \Comment{ $j\{p,q\}$ gets linear index of $p,q$ in lexicographical order}
	   
	   \EndIf  
	   \EndFor
	   \EndFor   
	    \\
		\Return  $\mathbf{\widehat{Z}}$
		\EndFunction }
	\end{algorithmic}
	\end{spacing}
\end{algorithm}

Combining Algorithms \ref{alg:eval_test} and \ref{alg:evec_test} leads to an effective method for determining the correlation structure among multiple data sets. Algorithm \ref{alg:eval_test} determines how many signal components, ${d}$, have nonzero correlations, and Algorithm \ref{alg:evec_test} reveals the data sets across which these ${d}$ components are correlated. The final output is a binary matrix, $\mathbf{Z}$, that is similar to Tables \ref{tab:Example_1} and \ref{tab:Example_2} except that nonzero correlation coefficients are represented by ones. We refer to this as a correlation map, an example of which can be seen in Fig. \ref{fig:corr_struc_true}.

\section{Numerical Results} 
\label{sec:results}

In this section, we use Monte Carlo simulations to demonstrate the performance of the proposed technique that combines Algorithms \ref{alg:eval_test} and \ref{alg:evec_test}. Initially, we compare our technique with those reported by \cite{wu2002determination,hasija2016detecting,hasija2016bootstrap,bhinge2017estimation} which aim to estimate the number of components correlated across all data sets, $d_\text{all}$. To estimate $d_\text{all}$ for our technique, we run Algorithms \ref{alg:eval_test} and \ref{alg:evec_test} and then count the number of components that are correlated across all the data sets.  Next, we investigate the behavior of the proposed technique when the pairwise correlation coefficients are not above the threshold necessary for the proof of Theorem \ref{theorem:eval}. We show that the method is robust to the violation of this assumption and that the accuracy remains high for many correlation structures. Finally, we compare the performance of our technique with an approach in \cite{marrinancomplete}, which also determines the complete correlation structure in multiple data sets. This comparison highlights the quantitative and qualitative differences between the two techniques. 

We present results with four different correlation structures. Some simulation settings are common to all four. The signal components in each data set are Gaussian distributed with variance of 1. The mixing matrices are randomly generated orthogonal matrices. Each data set is corrupted by additive white Gaussian noise (AWGN)\footnote{The Gaussian distribution of  signal components and noise, and the orthogonality of mixing matrices is not a requirement for the technique proposed in this work. MATLAB code for the proposed method and competing approaches can be found at 
\\ \url{https://github.com/SSTGroup/Correlation-Analysis-in-High-Dimensional-Data} and can be tested with different settings from the ones mentioned here.}. The variance of noise components is chosen according to the signal-to-noise-ratio (SNR) which is defined per component as
\begin{equation}
\text{SNR (dB)} = 10\log_{10} \bigg( \frac{\sigma_s^2}{\sigma_n^2} \bigg) , 
\end{equation}
where $\sigma_s^2$ is the variance of the signal and $\sigma_n^2$ is the variance of the noise. The SNR is the same for all data sets. The number of bootstrap resamples is $B=1000$ and the probability of false alarm is $P_{\text{fa}} = 0.05$. The performance plots are shown as a function of SNR, which is varied from $-10$ dB to $15$ dB. The results are averaged over $500$ independent trials. The performance of each method for determining model order is measured by the mean accuracy  (number of correct estimates divided by number of trials). The performance in estimating the complete correlation structure is measured using precision, 
i.e., the percentage of correctly detected correlations among all the detected correlations, and recall, i.e., the percentage of correctly detected correlations among all actual correlations. The four different scenarios are the following:   
\begin{enumerate}[label=\roman*)]
\item \textbf{Evaluation of model-order selection with special correlation structure}, for {$P=4$ data sets with $d = d_\text{all} = 3$}: Each data set is of dimension $n=7$ and the number of samples is $M = 350$. The components are either correlated across all data sets or are uncorrelated. Thus, the number of components that are correlated between at least a pair of data sets, $d$, is equal to the number of components correlated across all data sets,  $d_\text{all}$. This type of correlation structure satisfies the assumption in \cite{wu2002determination,hasija2016detecting}. The pairwise correlation coefficients for the three correlated components are shown in Table \ref{tab:Sim_eg1}, all of which exceed the $\epsilon^{(i)} = (3/4)^2$ threshold as assumed by Theorem \ref{theorem:eval}. 

Fig. \ref{fig:eval_scen1} shows the mean accuracy of $\hat{d}_\text{all}$ as a function of the SNR for the proposed and the competing techniques. All the techniques correctly estimate $d_\text{all}$ when the SNR is high. When the SNR is low, the proposed approach outperforms all competing techniques. 
\begin{table}
\begin{center}
 \begin{tabular}{ c | c c c c c c} 
 & $\rho_{12}^{(i)}$ & $\rho_{13}^{(i)}$ & $\rho_{14}^{(i)}$ & $\rho_{23}^{(i)}$ & $\rho_{24}^{(i)}$ & $\rho_{34}^{(i)}$ \\[1ex]
  \hline \Tstrut
 $i=1$ & 0.63 & 0.78 & 0.69 & 0.81 & 0.64 & 0.91 \\[.5ex]
 $i=2$ & 0.62 & 0.67 & 0.74 & 0.71 & 0.82 & 0.91 \\[.5ex]
 $i=3$ & 0.84 & 0.81 & 0.72 & 0.57 & 0.71 & 0.62 
\end{tabular}
\end{center}
\caption{ Correlation structure of the three correlated components in four data sets used in the first simulation setup.} 
\label{tab:Sim_eg1}
\end{table}

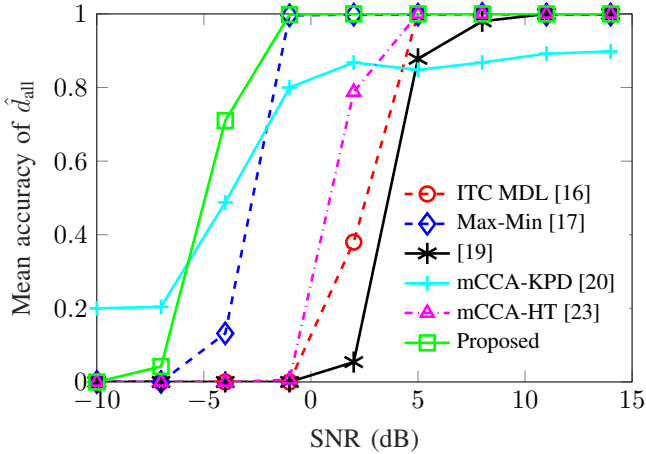
\begin{figure}[!t]
        \centering
        \setlength\fwidth{0.4\textwidth}
        \setlength\fheight{0.27\textwidth}
%
%
\definecolor{mycolor1}{rgb}{0.00000,1.00000,1.00000}%
\definecolor{mycolor2}{rgb}{1.00000,0.00000,1.00000}%
\begin{tikzpicture}

\begin{axis}[%
width=0.981\fwidth,
height=\fheight,
at={(0\fwidth,0\fheight)},
scale only axis,
xmin=-10,
xmax=15,
xlabel style={font=\color{white!15!black}},
xlabel={SNR (dB)},
ymin=0,
ymax=1,
ylabel style={font=\color{white!15!black}},
ylabel={Mean accuracy of $\hat{d}_\textrm{all}$},
axis background/.style={fill=white},
legend style={at={(.995,0.05)}, anchor=south east, legend cell align=left, align=left, draw=none, font=\fontsize{8.5}{8.5}\selectfont}
]
\addplot [color=red, dashed, line width=1.0pt, mark size=2.8pt, mark=o, mark options={solid, red}]
  table[row sep=crcr]{%
-10	0\\
-7	0\\
-4	0\\
-1	0\\
2	0.38\\
5	1\\
8	1\\
11	1\\
14	1\\
};
\addlegendentry{ITC MDL [16]}

\addplot [color=blue, dashed, line width=1.0pt, mark size=4pt, mark=diamond, mark options={solid, blue}]
  table[row sep=crcr]{%
-10	0\\
-7	0\\
-4	0.132\\
-1	0.996\\
2	0.998\\
5	0.998\\
8	1\\
11	0.998\\
14	0.998\\
};
\addlegendentry{Max-Min [17]}

\addplot [color=black, line width=1.0pt, mark size=4.0pt, mark=asterisk, mark options={solid, black}]
  table[row sep=crcr]{%
-10	0\\
-7	0\\
-4	0\\
-1	0\\
2	0.054\\
5	0.878\\
8	0.98\\
11	1\\
14	1\\
};
\addlegendentry{ [19]}

\addplot [color=mycolor1, line width=1.0pt, mark size=2.8pt, mark=+, mark options={solid, mycolor1}]
  table[row sep=crcr]{%
-10	0.2\\
-7	0.204\\
-4	0.488\\
-1	0.8\\
2	0.868\\
5	0.848\\
8	0.868\\
11	0.892\\
14	0.898\\
};
\addlegendentry{mCCA-KPD [20]}

\addplot [color=mycolor2, dashdotted, line width=1.0pt, mark size=2.8pt, mark=triangle, mark options={solid, mycolor2}]
  table[row sep=crcr]{%
-10	0\\
-7	0\\
-4	0\\
-1	0.006\\
2	0.788\\
5	1\\
8	1\\
11	1\\
14	1\\
};
\addlegendentry{mCCA-HT [23]}

\addplot [color=green, line width=1.0pt, mark size=2.8pt, mark=square, mark options={solid, green}]
  table[row sep=crcr]{%
-10	0\\
-7	0.042\\
-4	0.71\\
-1	0.998\\
2	1\\
5	1\\
8	1\\
11	1\\
14	1\\
};
\addlegendentry{Proposed}

\end{axis}
\end{tikzpicture}%
    \caption{Mean accuracy of $\hat{d}_\text{all}$ for the proposed and the competing techniques in detecting three components correlated across all four data sets.}
    \label{fig:eval_scen1}  
    \end{figure}

\item \textbf{Evaluation of model-order selection with arbitrary correlation structure}, for {$P=4$ data sets with $d = 3, d_\text{all} = 1$}: In this setting all the parameters are the same as in the previous scenario except that the first component of each data set is correlated  with all other data sets but the other two components are only correlated across a subset of data sets. The second component is correlated across all except the first data set and the third component is correlated between data sets two and four. The methods in \cite{wu2002determination,hasija2016detecting} are not evaluated as they are inapplicable in this setting. 

Fig. \ref{fig:eval_scen2} shows that the proposed technique works better than the techniques in \cite{hasija2016bootstrap,bhinge2017estimation,marrinancomplete} in estimating the model order $d_\text{all}$ for low values of SNR. It is also worth noting that the techniques in \cite{hasija2016bootstrap,bhinge2017estimation} estimate only $d_\text{all}$ while the proposed method and \cite{marrinancomplete} also detect the components correlated across subsets of the data sets along with their correlation structure.
 \begin{figure}[!t]
        \centering
        \setlength\fwidth{0.4\textwidth}
        \setlength\fheight{0.27\textwidth}
%
%
\definecolor{mycolor1}{rgb}{0.00000,1.00000,1.00000}%
\definecolor{mycolor2}{rgb}{1.00000,0.00000,1.00000}%
\begin{tikzpicture}

\begin{axis}[%
width=0.981\fwidth,
height=\fheight,
at={(0\fwidth,0\fheight)},
scale only axis,
xmin=-10,
xmax=15,
xlabel style={font=\color{white!15!black}},
xlabel={SNR (dB)},
ymin=0,
ymax=1,
ylabel style={font=\color{white!15!black}},
ylabel={Mean accuracy of $\hat{d}_\textrm{all}$},
axis background/.style={fill=white},
legend style={at={(.995,0.02)}, anchor=south east, legend cell align=left, align=left, draw=none, font=\fontsize{8.5}{8.5}\selectfont}
]
\addplot [color=black, line width=1.0pt, mark size=4pt, mark=asterisk, mark options={solid, black}]
  table[row sep=crcr]{%
-10	0\\
-7	0\\
-4	0\\
-1	0\\
2	0.002\\
5	0.596\\
8	0.994\\
11	1\\
14	1\\
};
\addlegendentry{[19]}

\addplot [color=mycolor1, line width=1.0pt, mark size=2.8pt, mark=+, mark options={solid, mycolor1}]
  table[row sep=crcr]{%
-10	0.302\\
-7	0.314\\
-4	0.478\\
-1	0.684\\
2	0.786\\
5	0.806\\
8	0.806\\
11	0.832\\
14	0.828\\
};
\addlegendentry{mCCA-KPD [20]}

\addplot [color=mycolor2, dashdotted, line width=1.0pt, mark size=2.8pt, mark=triangle, mark options={solid, mycolor2}]
  table[row sep=crcr]{%
-10	0\\
-7	0\\
-4	0\\
-1	0.266\\
2	0.902\\
5	1\\
8	1\\
11	1\\
14	1\\
};
\addlegendentry{mCCA-HT [23]}

\addplot [color=green, line width=1.0pt, mark size=2.8pt, mark=square, mark options={solid, green}]
  table[row sep=crcr]{%
-10	0.002\\
-7	0.088\\
-4	0.856\\
-1	0.928\\
2	0.934\\
5	0.952\\
8	0.966\\
11	0.972\\
14	0.978\\
};
\addlegendentry{Proposed}

\end{axis}
\end{tikzpicture}%
           \caption{ Mean accuracy of $\hat{d}_\text{all}$ for the proposed and the competing techniques in detecting $d_\text{all}=1$ component correlated across all four data sets, in presence of two signal components correlated across a subset of the data sets, i.e., $d=3$. }
    \label{fig:eval_scen2}  
    \end{figure}
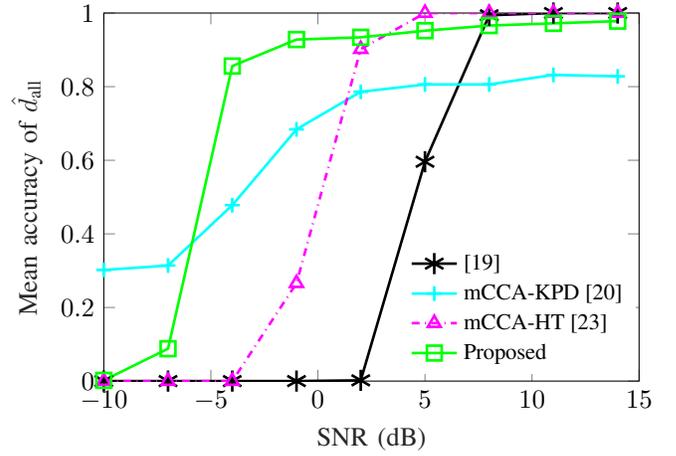
    
 \item \textbf{Performance of the proposed method when the element-wise threshold is not met}, for {$P=5$ data sets with $d = d_\text{all} = 2$}: 
We also investigate the performance of the proposed technique for determining the number of correlated components when some of the pairwise correlation coefficients do not meet the threshold required for Theorem \ref{theorem:eval}. 
For this, we assume that the first two components in each data set are correlated across all data sets. Thus, the threshold for the pairwise correlation coefficients given by Theorem \ref{theorem:eval} is $\epsilon^{(1)} = \epsilon^{(2)} = \epsilon = (\frac{4}{5})^2 = 0.64$. We keep some of the pairwise correlation coefficients above the threshold $ \epsilon$ and vary the remaining ones. More precisely, we set $\rho^{(i)}_{pq} = 0.7$ for $p > q$, $p = 2, 3, 4, 5$ and $q = 3, 4, 5$, which exceeds the threshold $\epsilon$, and jointly vary the remaining correlation coefficients $\rho_{12}^{(i)} = \rho_{13}^{(i)} = \rho_{14}^{(i)}= \rho_{15}^{(i)} = \rho$, for $i=1,2$. To demonstrate the relative robustness of the method against violating the assumption in Theorem \ref{theorem:eval}, we show the accuracy of $\hat{d}$ as a function of $\rho$ in Fig. \ref{fig:scen3} for different values of SNR. 

For $\rho < \epsilon$, the performance depends on the SNR: For low SNR, it becomes increasingly difficult to correctly estimate $d$ for only weakly correlated components. On the other hand, as long as the SNR is high enough, violating the threshold in Theorem \ref{theorem:eval} does not present a problem, and $d$ can still be correctly determined.
  \begin{figure}[!t]
        \centering
        \setlength\fwidth{0.38\textwidth}
        \setlength\fheight{0.27\textwidth}
%
%
\begin{tikzpicture}

\begin{axis}[%
width=\fwidth,
height=0.998\fheight,
at={(0\fwidth,0\fheight)},
scale only axis,
xmin=0.1,
xmax=0.9,
xlabel style={font=\color{white!15!black}},
xlabel={Correlation coefficient $\rho$},
ymin=0,
ymax=1,
ylabel style={font=\color{white!15!black}},
ylabel={Mean accuracy of $\hat{d}$},
axis background/.style={fill=white},
legend style={at={(0.197,0.362)}, anchor=south west, legend cell align=left, align=left, draw=none, font=\fontsize{9}{9}\selectfont}
]
\addplot [color=black, dotted, line width=1.2pt, mark size=3.2pt, mark=diamond, mark options={solid, black}]
  table[row sep=crcr]{%
0.88	0.996\\
0.8	0.996\\
0.7	0.98\\
0.6	0.952\\
0.5	0.912\\
0.4	0.858\\
0.3	0.776\\
0.2	0.71\\
0.1	0.672\\
};
\addlegendentry{SNR = -2.5dB}

\addplot [color=green, dashdotted, line width=1.2pt, mark size=2.8pt, mark=o, mark options={solid, green}]
  table[row sep=crcr]{%
0.88	1\\
0.8	1\\
0.7	1\\
0.6	1\\
0.5	1\\
0.4	1\\
0.3	1\\
0.2	1\\
0.1	1\\
};
\addlegendentry{SNR = 0dB}

\addplot [color=red, dashed, line width=1.2pt]
  table[row sep=crcr]{%
0.64	0\\
0.64	1\\
};

\node[below right, align=left, font=\color{red}]
at (rel axis cs:0.453,0.212) {$\epsilon = 0.64$};
\end{axis}
\end{tikzpicture}%
           \caption{ Mean accuracy for estimating $d = 2$ components correlated in five data sets as a function of the correlation coefficient $\rho$.}
    \label{fig:scen3}  
    \end{figure}
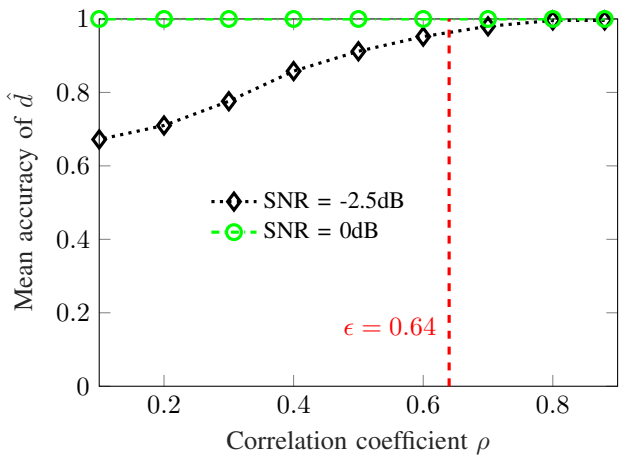    
     
\item \textbf{Evaluation of complete correlation structure}, for {$P=5$ data sets with $d = 3$, $d_\text{all} = 1$}: Finally, we compare the performance of the proposed method and the technique in \cite{marrinancomplete} for determining the complete correlation structure. Each data set has $n=4$ components and the number of samples is $M=250$. Of the $d=3$ correlated components, $d_\text{all} = 1$ and the first component of each data set is correlated with the first component of each other data set. The second component of each data set is correlated across all the data sets except the fourth data set. Finally, the third components of data sets one, four and five are correlated. Each pairwise correlation coefficient is $0.7$, thus exceeding the threshold as required in Theorem \ref{theorem:eval}. 
 
 Fig. \ref{fig:mean_acc_d} shows the mean accuracy of estimating ${d}$ for both techniques as a function of SNR. Both methods perform well for medium and high SNR values when estimating $d$, with the proposed method slightly outperforming the method in \cite{marrinancomplete}  for low values of SNR.  Fig. \ref{fig:mean_pr} shows the mean precision and recall values for determining the complete correlation structure. The precision of the method in \cite{marrinancomplete} is better at low SNR while the recall for the proposed method is better at both low and medium SNRs. 
 
 \begin{figure}[h]
    \centering
    \begin{subfigure}[b]{0.5\textwidth}
        \centering
        \setlength\fwidth{0.73\textwidth}
        \setlength\fheight{0.52\textwidth}
%
%
\definecolor{mycolor1}{rgb}{1.00000,0.00000,1.00000}%
\begin{tikzpicture}

\begin{axis}[%
width=\fwidth,
height=0.992\fheight,
at={(0\fwidth,0\fheight)},
scale only axis,
xmin=-10,
xmax=15,
xlabel style={font=\color{white!15!black}},
xlabel={SNR (dB)},
ymin=0,
ymax=1,
ylabel style={font=\color{white!15!black}},
ylabel={Mean accuracy of $\hat{d}$},
axis background/.style={fill=white},
legend style={at={(.99,0.12)}, anchor=south east, legend cell align=left, align=left, draw=none,font=\fontsize{9}{9}\selectfont}
]
\addplot [color=mycolor1, dotted, line width=1.2pt, mark size=2.8pt, mark=triangle, mark options={solid, mycolor1}]
  table[row sep=crcr]{%
-10	0\\
-7	0.006\\
-4	0.202\\
-1	0.804\\
2	0.976\\
5	0.978\\
8	0.994\\
11	0.988\\
14	0.996\\
};
\addlegendentry{mCCA-HT [23]}

\addplot [color=green, line width=1.2pt, mark size=2.8pt, mark=square, mark options={solid, green}]
  table[row sep=crcr]{%
-10	0.002\\
-7	0.032\\
-4	0.322\\
-1	0.8\\
2	1\\
5	1\\
8	1\\
11	1\\
14	1\\
};
\addlegendentry{Proposed}

\end{axis}
\end{tikzpicture}%
        \caption{}
        \label{fig:mean_acc_d}
    \end{subfigure}
    \begin{subfigure}[b]{0.5\textwidth}
        \centering
        \setlength\fwidth{0.73\textwidth}
        \setlength\fheight{0.52\textwidth}
%
%
\definecolor{mycolor1}{rgb}{1.00000,0.00000,1.00000}%
\begin{tikzpicture}

\begin{axis}[%
width=\fwidth,
height=0.986\fheight,
at={(0\fwidth,0\fheight)},
scale only axis,
xmin=-10,
xmax=15,
xlabel style={font=\color{white!15!black}},
xlabel={SNR (dB)},
ymin=0,
ymax=1,
ylabel style={font=\color{white!15!black}},
ylabel={Mean Precision and Recall},
axis background/.style={fill=white},
legend style={at={(.995,0.05)}, anchor=south east, legend cell align=left, align=left, draw=none,font=\fontsize{8.5}{8.5}\selectfont}
]
\addplot [color=mycolor1, dotted, line width=1pt, mark size=2.5pt, mark=triangle, mark options={solid, mycolor1}]
  table[row sep=crcr]{%
-10	0.13\\
-7	0.517366666666667\\
-4	0.937600793650794\\
-1	0.968316199091586\\
2	0.974542900142514\\
5	0.981167418546366\\
8	0.983663157894737\\
11	0.985444360902256\\
14	0.983134837092732\\
};
\addlegendentry{Precision mCCA-HT [23]}

\addplot [color=green, line width=1pt, mark size=2.5pt, mark=square, mark options={solid, green}]
  table[row sep=crcr]{%
-10	0.000933333333333333\\
-7	0.357662092585777\\
-4	0.920432960431457\\
-1	0.964284676269574\\
2	0.991936779162866\\
5	0.995581027667984\\
8	0.994920036485254\\
11	0.996399209486166\\
14	0.99195378534509\\
};
\addlegendentry{Precision Proposed}

\addplot [color=mycolor1, dotted, line width=1pt, mark size=2.5pt, mark=o, mark options={solid, mycolor1}]
  table[row sep=crcr]{%
-10	0.0076842105263158\\
-7	0.0410526315789474\\
-4	0.246842105263158\\
-1	0.725473684210526\\
2	0.969789473684212\\
5	0.982736842105264\\
8	0.985052631578948\\
11	0.987578947368421\\
14	0.984105263157895\\
};
\addlegendentry{Recall mCCA-HT [23]}

\addplot [color=green, line width=1pt, mark size=2.5pt, mark=o, mark options={solid, green}]
  table[row sep=crcr]{%
-10	0.000736842105263158\\
-7	0.198947368421053\\
-4	0.794631578947371\\
-1	0.955894736842107\\
2	1\\
5	1\\
8	1\\
11	1\\
14	1\\
};
\addlegendentry{Recall Proposed}

\end{axis}
\end{tikzpicture}%
        \caption{}
        \label{fig:mean_pr}
    \end{subfigure}
    \caption{Performance of the proposed technique and the technique in \cite{marrinancomplete} for determining the complete correlation structure in five data sets.  
    a) Mean accuracy of estimating $d$, the total number of correlated signal components b) Precision and recall for determining the complete correlation structure of the detected components. }
    \label{fig:mean_acc_wu_true}  
    \end{figure}
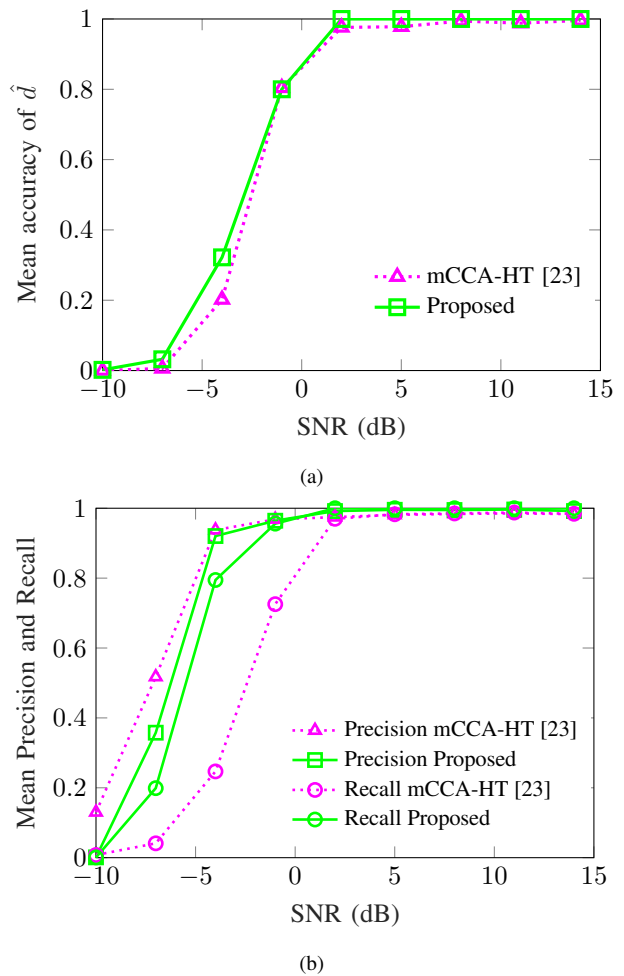
    
 The main reason for this difference is tied to the information that each method uses for hypothesis testing. The method in \cite{marrinancomplete} performs mCCA to extract the sets of signal components that are highly correlated via a deflationary approach. The components are extracted jointly from all data sets. Then, hypothesis tests are conducted on the extracted signal components from pairs of data sets to detect the underlying correlation structure. At low SNR values some of the correlations are missed while testing an individual pair of data sets, so its average recall is small in this regime. 
 
 On the other hand, the proposed method detects the components and their correlation structure by applying hypothesis tests to the eigenvalues and eigenvectors of the composite coherence matrix directly. The eigenvalues and the eigenvectors are functions of \textit{all} the pairwise correlation coefficients associated with the component and thus, the proposed method tests on this joint information. 
 
 This is further illustrated in Fig. \ref{fig:corr_struc_est}, which shows heat maps of the average accuracy for the two methods in estimating the complete correlation structure for this scenario. The true correlation structure is visualized in Fig. \ref{fig:corr_struc_true} using a binary map. This map mirrors the structure of  Tables \ref{tab:Example_1}, \ref{tab:Example_2} and \ref{tab:Sim_eg1} but represents the nonzero correlation coefficients with white blocks and the zero correlation coefficients with black blocks.  This binary map can be compared to heat maps of the simulation results to assess qualitative differences between the proposed method and the technique in \cite{marrinancomplete}. Ideally, these heat maps should look exactly like the binary map in Fig. \ref{fig:corr_struc_true}.   
     \begin{figure}[!t]
        \centering
        \includegraphics[scale=0.32]{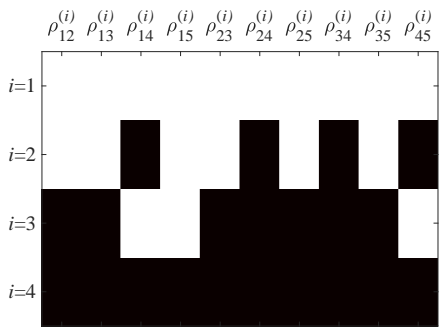}
           \caption{The correlation structure of four components correlated in five data sets. The white blocks represent nonzero correlation coefficients and the black blocks represent zero correlation coefficients.} 
    \label{fig:corr_struc_true}  
    \end{figure}    
    \begin{figure*}[!t]
        \centering
        \includegraphics[scale=0.285]{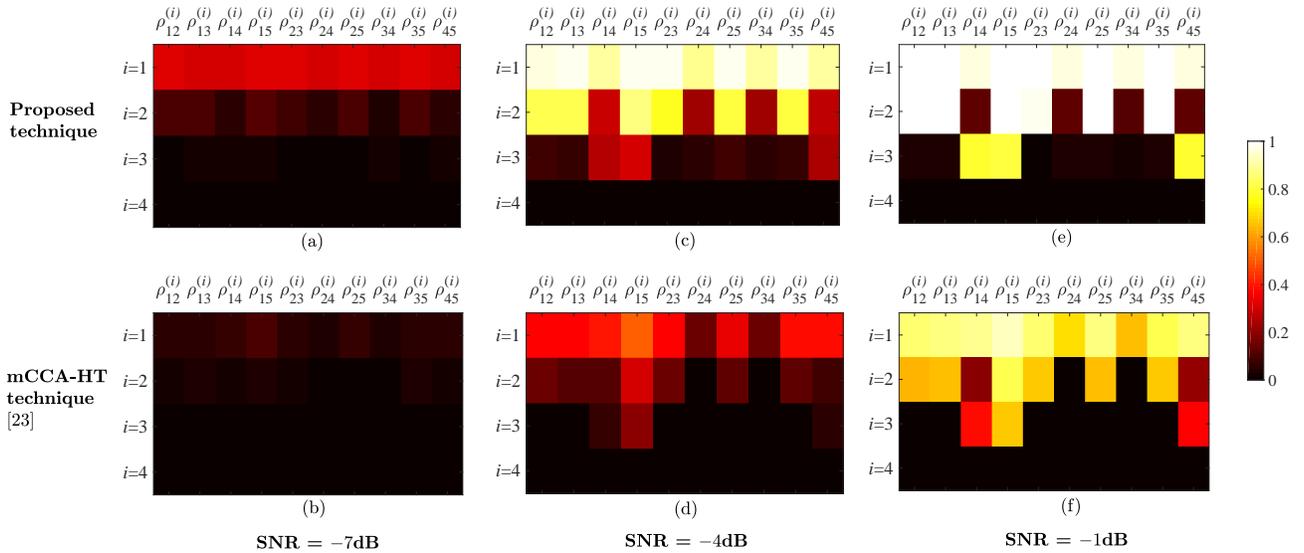}
           \caption{Heat maps showing the mean accuracy of detecting individual correlations for the proposed method and the method in \cite{marrinancomplete} at three different SNR values of  $-7$dB, $-4$dB and $-1$dB. The true correlation structure is shown in Fig. \ref{fig:corr_struc_true}.} 
    \label{fig:corr_struc_est}  
    \end{figure*}
In the low SNR regime (shown in Fig. \ref{fig:corr_struc_est}a and b at SNR = $-7$dB), both techniques detect very few correlations as illustrated by the dark color of the boxes. Some of the boxes in the second row corresponding to $i=2$ in Fig. \ref{fig:corr_struc_est}a that should be black are dark brown, indicating that the proposed method detects a few false positives, and therefore has low precision value compared to the method in \cite{marrinancomplete}. 

The main differences between performance of the two methods start to appear from SNR of $-4$dB. Fig. \ref{fig:corr_struc_est}c,d and \ref{fig:corr_struc_est}e,f show the heat maps for SNR$=-4$dB and SNR$=-1$dB, respectively. In Fig. \ref{fig:corr_struc_est}d and \ref{fig:corr_struc_est}f, boxes corresponding to the nonzero correlations of same component number have different colors.  This is expected because \cite{marrinancomplete} conducts tests on pairs of data sets so it is possible to detect the correlation between one pair and miss it between another.  On the other hand, the proposed technique tests the eigenvalues and eigenvectors of the composite coherence matrix, which provides a summary of all the pairwise correlations. Therefore, the boxes of the nonzero correlations corresponding to components with same component number are more uniform in color in Fig. \ref{fig:corr_struc_est}c and \ref{fig:corr_struc_est}e.  

At low SNR values, the number of data sets correlated across a particular component affects the accuracy of the proposed method.  The more data sets that are correlated along a given component, the better the proposed method performs. This can be observed in Fig. \ref{fig:corr_struc_est}c. The boxes of the first row ($i=1$) corresponding to nonzero correlations are significantly brighter than those of second and third rows ($i=2,3$), indicating a higher accuracy when detecting the first component. Similarly in Fig. \ref{fig:corr_struc_est}e, there is a contrast between the high accuracy when detecting the correlations of first and second components and the relatively lower accuracy of detecting the third component. This is because the eigenvalue associated with the component correlated across more data sets is significantly greater than one and thus makes its detection possible even when the noise power is high. This stands in contrast with the method from \cite{marrinancomplete} where no advantage is gained by the number of data sets across which a component is correlated. In Fig. \ref{fig:corr_struc_est}d and  \ref{fig:corr_struc_est}f we can see that boxes corresponding to the nonzero correlations of $i=1,2,3$ have less variation in color, indicating similar accuracy across the board.         
  
\end{enumerate}

\section{Conclusion} 
We have presented a technique that solves a model-selection problem to determine the complete correlation structure in multiple sets of data, i.e., identifying which components are correlated across which data sets. We proved that the eigenvalues and the eigenvectors of the composite coherence matrix completely characterize this information. Using these results, we developed a bootstrap-based hypothesis testing technique that estimates the complete correlation structure, which has shown competitive performance and broad applicability in various simulation scenarios. The proposed technique utilizes the joint information from all the data sets provided by the eigenvalues and the eigenvectors and thus performs well in detecting the components that are correlated across a large number of data sets even at low SNR. This is promising for various fields, e.g., biomedicine, genomics and remote sensing, where a common phenomenon is observed using different modalities.

\begin{appendix}
\label{app:lem1}
\subsection{Number of positive eigenvalues of a hollow symmetric matrix}
Crucial to the goal of identifying correlated signals via the eigenvalues of $\CB$ is a requirement that the correlated subspace of signal components associated with the same component number correspond to a single eigenvalue. As we see in Theorem \ref{theorem:eval}, this requirement necessitates identifying classes of matrices that have exactly one positive eigenvalue. The most general result in this domain comes from \cite{charles2013nonpositive} and characterizes the eigenvalues of hollow (zero-diagonal) symmetric nonnegative matrices. 

To leverage this result, we must first define a generalized Ramsey number. Let $\{ G_{a_1}, G_{a_2}, \ldots, G_{a_c}\}$ be a
collection of simple graphs where $a_i$ is the number of vertices of the $i$th graph. Suppose we wish to color the edges of a complete graph $G$ with $c$ colors. The \textit{generalized Ramsey number}, $R(G_{a_1}, G_{a_2}, \ldots, G_{a_c})$, is the minimum number of vertices of the complete graph $G$ such that for any $c$-coloring of $G$, there exists an $i \in \{ 1,2, \ldots, c\} $ such that $G_{a_i}$ is an induced subgraph of $G$ with all edges of color $i$. By Ramsey's Theorem \cite{ramsey2009problem} such a number  always exists. 

\begin{theorem}
\emph{(Charles, Farber, Johnsons, and Kennedy-Schaffer [\cite{charles2013nonpositive}, Theorem 3.5]).} Let $k$ and $2 \leq j \leq k-1$ be two positive integers, let $G_{j+1}$ be a complete graph on $j+1$ vertices, and let $c> 1$ be the smallest integer for which 
\begin{equation*}
k \leq R(\underbrace{G_{j+1},G_{j+1},\ldots,G_{j+1}}_{c \text{ }  times}).
\end{equation*}
Let $\epsilon = (\frac{j}{j+1})^c.$ Then all hollow symmetric nonnegative matrices of order at least $k$ and with off-diagonal entries from $(\epsilon,1]$ have at least $j$ nonpositive eigenvalues. 
\end{theorem}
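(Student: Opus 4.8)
The plan is to exhibit a principal submatrix of order $j+1$ with at least $j$ strictly negative eigenvalues and then lift this to the whole matrix by Cauchy interlacing, using that a principal submatrix never has more negative eigenvalues than the ambient matrix. Two reductions set the stage. First, since deleting a row and column lowers the number of positive eigenvalues by at most one, any $k\times k$ principal submatrix of an order-$m\ge k$ hollow symmetric nonnegative matrix with at most $j-1$ nonpositive eigenvalues still has at most $j-1$ nonpositive eigenvalues, so it suffices to treat matrices of order exactly $k$. Second, the following block bound: if $M$ is hollow, symmetric and nonnegative of order $m$ with off-diagonal entries in $(\epsilon',1]$ and $m\le\tfrac{2}{1-\epsilon'}$, then $M$ has at least $m-1$ strictly negative eigenvalues. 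Granting this, write $t=\tfrac{j}{j+1}$ and $R_c=R(\underbrace{G_{j+1},\dots,G_{j+1}}_{c})$, split $(\epsilon,1]=(t^c,t^0]$ into the $c$ windows $(t^{\ell},t^{\ell-1}]$, and color each edge $\{p,q\}$ of the complete graph on the $k$ indices by the window containing $A_{pq}$; whenever $k=R_c$, Ramsey's theorem produces a set $S$ of $j+1$ indices whose mutual entries all lie in one window, and after rescaling $A[S]$ to have off-diagonal entries in $(t,1]$ the block bound applies with $m=j+1$, $\epsilon'=t$ (note $j+1\le\tfrac{2}{1-t}=2(j+1)$), so $A[S]$ has $j$ strictly negative eigenvalues and Cauchy interlacing finishes.

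The block bound itself I would prove by a midpoint split: write $M=\alpha(J-I)+E$ with $\alpha=\tfrac{1+\epsilon'}{2}$, $J$ the all-ones matrix of order $m$, and $E$ hollow and symmetric with every off-diagonal entry of absolute value at most $\tfrac{1-\epsilon'}{2}$. Each row of $E$ then has absolute sum at most $(m-1)\tfrac{1-\epsilon'}{2}$, which is strictly less than $\alpha=\tfrac{1+\epsilon'}{2}$ exactly when $m<\tfrac{2}{1-\epsilon'}$; so $\lambda_{\max}(E)\le\|E\|_2<\alpha$, and since $\alpha(J-I)$ has eigenvalues $\alpha(m-1)$ (once) and $-\alpha$ (with multiplicity $m-1$), Weyl's inequality gives $\lambda_l(M)\le-\alpha+\lambda_{\max}(E)<0$ for $l=2,\dots,m$. (The borderline $m=\tfrac{2}{1-\epsilon'}$ only yields nonpositive rather than negative, but it does not arise in the use above.) Applied directly to $A$ this already settles the theorem whenever $k\le\tfrac{2}{1-\epsilon}$, in particular the case $c=2$, $k=j+1$ invoked in this paper, since there $\tfrac{2}{1-\epsilon}=\tfrac{2(j+1)^2}{2j+1}\ge j+1$.

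The genuine obstacle is the intermediate range of orders, $\tfrac{2}{1-\epsilon}<k<R_c$, which is nonempty in general: the entries then straddle several windows, $k$ is too small for a monochromatic $G_{j+1}$ to be forced, and too large for one midpoint split to kill all but the top eigenvalue. Closing this is what makes the argument of \cite{charles2013nonpositive} delicate. The natural attack is an induction on $c$: merge the $c-1$ finer windows into a single color, extract either a $G_{j+1}$ monochromatic in the coarsest window (handled by the block bound) or a large clique monochromatic in the merged color, and on the latter rescale by $\tfrac1t$ so that the induction hypothesis for exponent $c-1$ applies — once more using that a principal submatrix cannot increase the number of negative eigenvalues, so the hypothesis transfers and Cauchy interlacing returns the conclusion. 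The place where I expect the real work to lie is the bookkeeping of the generalized Ramsey numbers: the step needs $R_c\le R(G_{j+1},G_{R_{c-1}})$, but this inequality is generally strict (for instance $R(K_3;3)=17<18=R(K_3,K_6)$), so matching the clique sizes to the stated threshold $k$ calls for a more careful, multi-level version of the decomposition rather than a naive recursion.
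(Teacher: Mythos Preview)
The paper does not prove this theorem. It is quoted from Charles, Farber, Johnson, and Kennedy--Shaffer (Theorem~3.5 of \cite{charles2013nonpositive}) and invoked as a black box; the only argument the paper supplies in this appendix is the short deduction of the subsequent corollary (the special case $j=k-1$, where one verifies $k\le R(G_k,G_k)$ so that $c=2$ and $\epsilon=((k-1)/k)^2$). There is therefore no ``paper's own proof'' of this statement to compare your attempt against.

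On the attempt itself: your block bound via the midpoint split $M=\alpha(J-I)+E$ with $\alpha=\tfrac{1+\epsilon'}{2}$ and Weyl's inequality is correct, as are the two interlacing reductions (to order exactly $k$, and from a $(j{+}1)$-principal submatrix back up to $A$). The Ramsey coloring by the geometric windows $(t^\ell,t^{\ell-1}]$ is also the right idea. But, as you explicitly acknowledge, the sketch is incomplete: your $c$-color argument forces a monochromatic $K_{j+1}$ only when $k\ge R_c$, whereas the hypothesis gives $R_{c-1}<k\le R_c$, so only the borderline $k=R_c$ is handled and the whole range $R_{c-1}<k<R_c$ is left open. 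Your proposed induction on $c$ with a coarse/fine two-coloring is the natural attack, but the Ramsey bookkeeping you flag (the mismatch between $R_c$ and $R(G_{j+1},G_{R_{c-1}})$, already visible at $R(K_3;3)=17<18=R(K_3,K_6)$) is precisely where the substance of the Charles et~al.\ argument lies, and you have not supplied it. What you have written is an honest outline of the cited proof's architecture, together with a correct and self-contained treatment of the easy regime $k\le 2/(1-\epsilon)$, not a proof of the theorem; since the present paper never claims to prove it either, that is not a discrepancy with the paper.
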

As an immediate consequence of this theorem we have the following relevant result. 

\begin{corollary}
\label{lem1}
Let $\epsilon = (\frac{k-1}{k})^2$. If $\HB \in \mathbb{R}^{k \times k}$ is a hollow symmetric matrix with off-diagonal elements from $(\epsilon,1],$
then the largest eigenvalue of $\HB$ is positive and the other $k-1$ eigenvalues are nonpositive. 
\end{corollary}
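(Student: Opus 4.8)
The plan is to obtain the corollary as a direct specialization of Theorem~3 (Charles, Farber, Johnson, and Kennedy-Schaffer) with the parameter choice $j = k-1$, followed by an elementary trace argument that sharpens ``at least $k-1$ nonpositive eigenvalues'' into ``exactly one positive eigenvalue, which is the largest.'' Throughout I would take $k \ge 3$, which covers the range relevant to Theorem~\ref{theorem:eval} (there $k^{(i)} \ge 4$); the leftover case $k = 2$ is disposed of by inspection.

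First I would check that Theorem~3 is applicable with $j = k-1$: this is legitimate since $2 \le k-1 \le k-1$ when $k \ge 3$, so $G_{j+1} = G_k$ is the complete graph on $k$ vertices. I then need to pin down the smallest integer $c > 1$ with $k \le R(\underbrace{G_k,\ldots,G_k}_{c})$. Taking $c = 2$, the number $R(G_k,G_k)$ is the ordinary two-colour Ramsey number $R(k,k)$, and $R(k,k) \ge k$ holds trivially because a complete graph on $k-1$ vertices contains no $K_k$ at all; hence $c = 2$ already satisfies the inequality, and since it is the smallest integer exceeding $1$ it is the value furnished by the theorem. The corresponding threshold is then $\epsilon = \bigl(\tfrac{j}{j+1}\bigr)^{c} = \bigl(\tfrac{k-1}{k}\bigr)^{2}$, matching the statement. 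Applying Theorem~3 to $\HB$ --- which is symmetric, hollow, has off-diagonal entries in $(\epsilon,1]$ and hence nonnegative, and has order $k \ge k$ --- shows that $\HB$ has at least $k-1$ nonpositive eigenvalues, so at most one positive eigenvalue.

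Finally I would rule out the possibility that all $k$ eigenvalues of $\HB$ are nonpositive. Since $\HB$ is hollow, $\trace\HB = 0$, so the eigenvalues sum to zero; if none were positive they would all vanish, forcing $\HB = \BZ$ by the spectral theorem, contradicting the fact that each off-diagonal entry lies in $(\epsilon,1]$ with $\epsilon \ge 0$ and is therefore strictly positive. Hence $\HB$ has exactly one positive eigenvalue, necessarily the largest, and the remaining $k-1$ are nonpositive. The case $k = 2$ is immediate, as the only off-diagonal entry is some $a \in (\tfrac14,1]$ and the eigenvalues are $a > 0$ and $-a < 0$. The only step that requires any thought is the combinatorial bookkeeping fixing $c = 2$ --- so that the exponent in $\epsilon$ is genuinely $2$ and coincides with the corollary --- but this amounts to the elementary bound $R(k,k) \ge k$; the rest is substitution into Theorem~3 together with the trace identity.
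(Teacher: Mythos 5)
Your proof is correct and follows essentially the same route as the paper: specialize Theorem~3 with $j = k-1$, verify that $c=2$ is the relevant value via $k \le R(G_k,G_k)$, and then use the zero-trace of the hollow matrix to upgrade ``at most one positive eigenvalue'' to ``exactly one.'' Your handling is in fact slightly tidier in two spots --- the trivial bound $R(k,k)\ge k$ (no $K_k$ fits on fewer than $k$ vertices) and the explicit treatment of $k=2$, where Theorem~3's hypothesis $j\ge 2$ fails --- but these are refinements, not a different argument.
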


\begin{proof}
In the special case of  $j = k-1$, we are trying find the smallest value of $c > 1$ for which the Ramsey number of $c$ copies of $G_k$ is greater than $k$, that is, the smallest value of $c$ for which
\begin{equation*}
k \leq R(\underbrace{G_{k},G_{k},\ldots,G_{k}}_{c \text{ }  times}).
\end{equation*}
For $c = 2$, the Ramsey number of $\{G_k, G_k\}$ is the minimum number of vertices needed for a complete graph such that any coloring results in an isomophic copy of $G_k$ whose edges are all monochromatic (one color). When $k=2$, any 2-coloring of $G_2$ contains a monochromatic copy of $G_2$, therefore $k = R(G_2,G_2)$. For $k>2$, there exists a 2-coloring of $G_k$ that does not contain a monochromatic isomorphic copy of $G_k$. For example, any 2-coloring that is not monochromatic will not contain a copy of $G_k$. Thus the number of vertices needed must be greater than $k$ and $k \leq R(G_k, G_k) \forall k$. This implies that $\HB$ will have at least $k-1$ nonpositive eigenvalues when the off-diagonal elements are chosen from the interval $(\epsilon,1]$ with $\epsilon =  (\frac{k}{k+1})^2$. Since the trace of a symmetric matrix is the sum of its eigenvalues, the eigenvalues of $\HB$ must sum to zero. This means that $k-1$ eigenvalues are nonpositive, but not all identically zero, implying that the largest eigenvalue must be positive. Thus $\HB$ has exactly one positive eigenvalue as desired.
\end{proof}

\subsection{Identifiability of the underlying correlation structure} 
\label{app2}

Theorem \ref{theorem:eval} and \ref{theorem:evec} state the conditions that allow us to determine the correlated components along with their correlation structure using the eigenvalue decomposition of $\CB$. One additional assumption in Theorem \ref{theorem:evec} is that the eigenvalues of $\CB$ greater than one are distinct, i.e., have algebraic multiplicity of one. In this section, we will briefly discuss why this assumption is needed. We will also mention the scenarios in which the correlation structure can still be completely determined using Theorem \ref{theorem:evec} even if the assumption is not true.  

Let $\lambda^{(i)}$ and $\lambda^{(j)}$ be the two eigenvalues of $\CB$ with $\lambda^{(i)} > 1$ and $\lambda^{(j)} > 1$. Let $\uB^{(i)}$ and $\uB^{(j)}$ be the eigenvectors associated with $\lambda^{(i)}$ and $\lambda^{(j)}$, respectively. Let $\uB = a\uB^{(i)} + b\uB^{(j)}$ be a vector formed by a linear combination of $\uB^{(i)}$ and $\uB^{(j)}$, and $a$ and $b$ are scalars. If $\lambda^{(i)} = \lambda^{(j)}$,  any linear combination of $\uB^{(i)}$ and $\uB^{(j)}$ is an eigenvector of $\lambda^{(i)}$ or $\lambda^{(j)}$. In this case, if the $i$th and $j$th group of components are correlated across different data sets, their correlation structure, i.e., across which data sets the components are correlated, cannot always be determined using Theorem \ref{theorem:evec}. For instance, if the $i$th components are correlated across all data sets except the $p$th data set, then according to Theorem \ref{theorem:evec}, the $p$th part of $\uB^{(i)}$, $\uB_p^{(i)} = \mathbf{0}$. Similarly, $\uB_q^{(j)} = \mathbf{0}$ if the $j$th components are correlated across all data sets except the $q$th data set. When $\lambda^{(i)} = \lambda^{(j)}$, then $a \uB^{(i)} + b\uB^{(j)}$ can also be an eigenvector of $\lambda^{(i)}$ or $\lambda^{(j)}$ for any $a,b$. Therefore, $\uB_p^{(i)}$ or $\uB_q^{(j)}$ are not necessarily equal to zero. 

However, if the $i$th and $j$th components are correlated across the same subset of data sets, even when $\lambda^{(i)} = \lambda^{(j)}$, their correlation structure can be determined using Theorem \ref{theorem:evec}. This is due to the fact that the zeros in $\uB = a\uB^{(i)} + b\uB^{(j)} $ are at the same positions as those of $\uB^{(i)}$ and $\uB^{(j)}$ for any $a,b$. 

To conclude, Theorem \ref{theorem:evec} can completely identify the correlation structure of the components when the eigenvalues associated with the components that are correlated across different subset of data sets are distinct. 
\end{appendix}

\bibliographystyle{ieeetran}
\bibliography{references_ex_summary} 

\end{document}